\theoremstyle{plain}
\newtheorem{theorem}{Theorem}[section]
\newtheorem{lemma}[theorem]{Lemma}
\newtheorem{proposition}[theorem]{Proposition}
\newtheorem{observation}[theorem]{Observation}
\newtheorem{corollary}[theorem]{Corollary}
\theoremstyle{definition}
\newtheorem{definition}[theorem]{Definition}
\newtheorem{remark}[theorem]{Remark}
\newenvironment{reminder}[1]{\bigskip
	\noindent {\bf Reminder of #1.  }\em}{\smallskip}
\def\ShowAuthNotes{1}
\newcommand{\authnote}[2]{\ \\ \textcolor{red}{\parbox{0.9\linewidth}{[{\footnotesize {\bf #1:} { {#2}}}]}}\newline}
\newcommand{\authnote}[2]{}
\renewcommand{\tilde}{\widetilde}
\newcommand{\R}{\mathbb{R}}
\renewcommand{\Pr}{\operatorname*{\mathbb{P}}}
\newcommand{\E}{\operatorname*{\mathbb{E}}}
\newcommand{\bigo}{O}
\newcommand{\eps}{\varepsilon}
\renewcommand{\epsilon}{\varepsilon}
\newcommand{\poly}{\operatorname{\text{{\rm poly}}}}
\newcommand{\diff}{\operatorname{\text{{\rm diff}}}}
\title{A Massively Parallel Algorithm for  Minimum Weight Vertex Cover}
\author{
Mohsen Ghaffari
\thanks{ghaffari@inf.ethz.ch}\\
ETH Zurich
\and 
Ce Jin
\thanks{jinc16@mails.tsinghua.edu.cn}
\\
Tsinghua University
\and
Daan Nilis
\thanks{nilisdaan@gmail.com}\\
ETH Zurich
}
\begin{document}

	\maketitle
\begin{abstract}
We present a massively parallel algorithm, with near-linear memory per machine, that computes a $(2+\epsilon)$-approximation of minimum-weight vertex cover in $\bigo(\log\log d)$ rounds, where $d$ is the average degree of the input graph. 

Our result fills the key remaining gap in the state-of-the-art MPC algorithms for vertex cover and matching problems; two classic optimization problems, which are duals of each other. Concretely, a recent line of work---by Czumaj et al.\ [STOC'18], Ghaffari et al.\ [PODC'18], Assadi et al.\ [SODA'19], and Gamlath et al.\ [PODC'19]---provides $\bigo(\log\log n)$ time algorithms for $(1+\epsilon)$-approximate maximum weight matching as well as for $(2+\epsilon)$-approximate minimum cardinality vertex cover. However, the latter algorithm does not work for the general \emph{weighted} case of vertex cover, for which the best known algorithm remained at $\bigo(\log n)$ time complexity. 
\end{abstract}

	
\section{Introduction}
Over the past decade, and sparked by the practical successes of popular computing platforms such as MapReduce \cite{Dean2004MapReduceSD}, Hadoop \cite{White2009HadoopTD}, Dryad \cite{Isard2007DryadDD} and Spark \cite{Zaharia2010SparkCC}, the Massively Parallel Computation (MPC) model has emerged as a theoretical abstraction for large-scale parallel computation and it is receiving increasingly more attention from the algorithmic community. In contrast to the fine-grained parallelism found in celebrated models such as the PRAM~\cite{Wyllie1979TheCO}, this new model allows for a higher-level of granularity. In particular, instead of breaking computation into small read or write operations (and basic calculations) that are assumed to happen in lock-step rounds across all processors, the model assumes that each machine can process a small (e.g., polynomially smaller than the entire input) chunk of data per time unit and focuses on the number of rounds of parallelism in such a coarse-grained view. We formally introduce the model in Section~\ref{mpc_definition}.

Shortly after its inception, it has been shown that MPC is at least as powerful as PRAM \cite{Karloff2010AMO, Goodrich2011SortingSA}. Soon after, significantly faster algorithms were developed for a number of important graph problems. The exact speedup depends on the memory that one machine has with respect to the total number of vertices present in the graph. In the setting where each machine has roughly a linear amount of memory with respect to the number of vertices, many of the central problems have been solved in $\bigo(\log \log n)$ rounds. This is true for problems including Maximal Independent Set, nearly Maximum Matching, Maximal Matching, and 2-approximate Minimum Cardinality Vertex Cover \cite{Czumaj2018RoundCF,behnezhad2019exponentially, Ghaffari2018ImprovedMP}. 
The general technique is a certain ``\emph{round compression}'', which starts with an $\bigo(\log n)$ round PRAM or LOCAL \cite{Linial1992LocalityID} model algorithm and successively compresses a considerable fraction of the remaining rounds of the algorithm into just one round of the MPC model. However, these results do not extend to the Minimum Weight Vertex Cover problem and particularly the compression technique fails in managing the deviations created in the weighted case. Indeed, prior to this work, the question of a similar algorithm for the weighted case remained open, and the best known algorithm for the weighted case remained at the classic $\bigo(\log n)$ time that follows from PRAM and LOCAL model literature (e.g. \cite{koufogiannakis2009distributed}). In this paper, we resolve this problem by presenting an $\bigo(\log\log n)$ time algorithm for the weighted case.

In the next sections we formally introduce the MPC model and the current state-of-the-art. Consequently we describe our technical contributions on a high level. 

\subsection{Model Description}
\label{mpc_definition}
In this work we use the MPC (Massively Parallel Computation) model, which can be traced back to descriptions given by Karloff et al.\ \cite{Karloff2010AMO} and Feldman et al.\ \cite{Feldman2008OnDS}, and was later refined by several works \cite{Goodrich2011SortingSA, beame2017communication, andoni2014parallel}. In the most general setting we have the following description: given a problem with input size $N$, there are $M$ machines, each with $S$ words of memory. The size of each machine's memory is assumed to be considerably (e.g., polynomially) smaller than the entire data of the problem. More concretely, the memory size $S$ is assumed to satisfy $S \leq N^{1-\alpha}$ for a constant $\alpha > 0$. Ideally, we want to be able to work with as small as possible memory requirement, per machine. Since the cluster of machines has to be able to store the input, a natural lower bound for the number of machines is $M \geq \frac{N}{S}$; this is typically tight up to a logarithmic factor and we usually assume $M = \widetilde{\Theta}(\frac{N}{S})$. Initially the input is divided arbitrarily among all machines. Computation proceeds in synchronous rounds, where in each round every machine can execute some computation on the data it holds. This local computation is restricted to be of polynomial running time with respect to the local memory size. After this computation, there is a round of communication where each machine can send to every other machine some data -- thus the network graph is the complete graph. The only restriction on the communication is that the total amount of data that one machine sends or receives cannot exceed its memory capacity $S$. The bottleneck in this model is the communication. Therefore the analysis of an algorithm running in the MPC model is focused on the number of rounds.

When considering graph problems, for most problems, the technical difficulty (and the round complexity) of the problem increases as the memory per machine decreases. We will soon discuss instances of this. Considering this effect, in the literature, there is a further distinction of the regimes of the MPC model based on how much memory a machine has relative to the number of vertices, $n$, in the graph:

\begin{itemize}
    \item[(A)] \emph{Strongly super-linear memory regime}: $S \geq n^{1+ \beta}$, for a constant $\beta \in (0,1)$
    \item[(B)] \emph{Near-linear memory regime}: $S \in \widetilde{\Theta}(n)$
    \item[(C)] \emph{Strongly sub-linear memory regime}: $S \leq n^{1-\beta}$, for a constant $\beta\in(0,1)$
\end{itemize}
The goal in the area is to obtain fast algorithms for as small as possible memory requirements, per machine. 

\subsection{State-of-the-Art}
First of all, by positive simulation results from Karloff et al. \cite{Karloff2010AMO} and Goodrich et al. \cite{Goodrich2011SortingSA} it is known that any CREW PRAM algorithm using $\bigo(n^{2-2\epsilon})$ total  memory, $\bigo(n^{2-2\epsilon})$ processors and $t=t(n)$ time can be run in $\bigo(t)$ rounds of MPC. The primary goal in the MPC model is to find algorithms that run strictly (and ideally significantly) faster than their PRAM counterparts. For the problems in focus in this paper---$(2+\eps)$-approximate Vertex Cover, Maximal Matching, and $(1+\eps)$-approximate Maximum Matching---$O(\log n)$ time algorithms are known through this simulation and via classic results in the PRAM and LOCAL model \cite{israeli1986fast, lotker2008improved}. The primary goal for these problems in the MPC setting is to obtain algorithms with time complexity much faster than $O(\log n)$, ideally with as small as possible memory requirements per machine.

For the strongly super-linear memory regime, Lattanzi et al.\ \cite{Lattanzi2011FilteringAM} provided a constant round algorithm to find a maximal matching. Using that algorithm as a building block they also provided constant round algorithms to compute an 8-approximate weighted maximum matching, 2-approximate minimum vertex cover and 3/2-approximate minimum edge cover.

The near-linear memory regime presented much more difficulty and for a number of years, there was no sub-logarithmic time algorithm known for any of these problems. That changed with a breakthrough from Czumaj et al.\ that showed the first sub-logarithmic time algorithm for maximum matching \cite{Czumaj2018RoundCF}.
In that work the authors present an algorithm that computes a $(2+\epsilon)$-approximate maximum matching in $\bigo((\log \log n)^2)$ MPC rounds. Later this result has been improved and simplified by Assadi et al.\ \cite{Assadi2017CoresetsME} and Ghaffari et al.\ \cite{Ghaffari2018ImprovedMP} to reach an $\bigo(\log \log n)$ round algorithm that computes a $(1+\epsilon)$-approximate maximum matching.
Moreover, the method of Ghaffari et al.\ \cite{Ghaffari2018ImprovedMP} also provides a $(2+\epsilon)$-approximate minimum cardinality vertex cover. Gamlath et al.\ \cite{gamlath2019weighted} showed an extension of the matching problem to the weighted case, providing a $(1+\epsilon)$-approximate maximum weight matching in $\bigo(\log \log n)$ rounds. However, the weighted case of the vertex cover problem has remained open and the best known algorithm remains at the $O(\log n)$ complexity that follows from the PRAM literature.

\medskip
\noindent \textbf{A remark regarding the strongly sub-linear memory regime.}
The case of the strongly sub-linear memory regime appears to be considerably harder and there is no known $\poly(\log\log n)$ round algorithm for the general case of any of the above problems. The best known result is a work of Ghaffari and Uitto \cite{Ghaffari2018SparsifyingDA} that provides $\tilde \bigo(\sqrt{\log n})$-round algorithms for maximal independent set, maximal matching, 2-approximation of minimum vertex cover, and $(1+\epsilon)$-approximation of maximum matching. For special graph families, concretely trees and low arboricity graphs, $\poly(\log\log n)$ round algorithms were provided by Behnezhad et al.~\cite{behnezhad2019massively}.

\subsection{Our Contributions}
Our main result, which positively answers the open question for the weighted case of vertex cover in the near-linear memory regime, is as follows:

\begin{theorem}
There is a randomized MPC algorithm, with $\tilde{O}(n)$ memory per machine, that computes a $(2+\epsilon)$-approximate minimum weight vertex cover in $\bigo(\log\log d)$ rounds in any input graph with $n$ vertices and average degree $d$, with high probability\footnote{As standard, we use the phrase \emph{with high probability}, or the abbreviation w.h.p., to indicate that an event happens with probability at least $1-\frac{1}{n^c}$ for any desirable constant $c>0$.}. 
\end{theorem}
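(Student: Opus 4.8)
The plan is to follow the "round compression" paradigm that succeeded for the unweighted case, but to set it up on top of a slowly-converging *primal-dual* / *local-ratio* algorithm for weighted vertex cover rather than on a matching-based routine. The starting point is a sequential (or LOCAL-style) $O(\log n)$-round algorithm for $(2+\epsilon)$-approximate minimum weight vertex cover, e.g.\ the deterministic primal-dual process of Koufogiannakis--Young \cite{koufogiannakis2009distributed}: each vertex $v$ maintains a residual weight $w(v)$, and in each round one raises edge-variables $y_e$ uniformly on all still-uncovered edges until some endpoint becomes "tight" ($\sum_{e\ni v} y_e = w_0(v)$), at which point $v$ is frozen into the cover and removed. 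The key structural facts I would extract and reuse are: (i) the process terminates in $O(\log n)$ rounds because in each round a constant fraction of the *weighted* edge-mass (appropriately measured) is removed, and (ii) the approximation guarantee is "local": it only requires that the dual $y$ be (approximately) feasible and that every chosen vertex be (approximately) tight, so small additive perturbations of the $y_e$'s and $w(v)$'s degrade the ratio by only a $(1+\epsilon)$ factor.

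First I would handle the high-degree reduction: as in Czumaj et al.\ and Ghaffari et al., since memory is $\tilde O(n)$, one can afford to work with a graph where the maximum degree is bounded by roughly $d \cdot \poly(\log n)$ (vertices of much higher degree can be dealt with separately and cheaply — a vertex incident to an enormous amount of uncovered weight becomes tight almost immediately and can be placed in the cover, charging against the optimum), so after $O(1)$ rounds of preprocessing we may assume $\Delta = d\cdot\poly\log$, hence the whole graph has $\tilde O(nd)$ edges and fits across the machines with $\tilde O(n)$ memory each. Next, the core round-compression step: partition the vertex set into $\tilde\Theta(1)$ (actually a growing number of) random parts, place each induced subgraph on its own machine, and *simulate many consecutive rounds of the primal-dual process locally on each machine, using only intra-part edges*. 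The edge-variables $y_e$ for cross-part edges are handled by a "discounting" trick: a vertex in a part of sampling probability $p$ pretends it owns a $1/p$ fraction of each incident edge-class, so that in expectation the local increments match the global process; one recomputes a fresh random partition with a larger sampling probability after each phase, exactly as in the exponentiation schedule of \cite{Ghaffari2018ImprovedMP}, so that after $i$ phases each remaining vertex has degree compressed by an extra factor, and $O(\log\log d)$ phases suffice to drive all residual degrees below the per-machine memory, after which the remainder is finished in $O(1)$ rounds by gathering everything that survives.

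The analysis then splits into two parts. For the round count, one shows (as in the unweighted case) that a single phase compressing $k$ rounds costs $O(1)$ MPC rounds and shrinks the relevant degree/edge-mass parameter from $D$ to roughly $D/\poly(\log D)$ or to $\sqrt D$ depending on the schedule, giving the $O(\log\log d)$ bound; this part is essentially a transcription of the known argument and the average-degree (rather than $n$) dependence comes precisely from the high-degree reduction above. For correctness — and this is the main obstacle — one must control how the random partitioning and the cross-edge discounting perturb the primal-dual invariants. Concretely, within a phase the locally-simulated $y_e$ increments deviate from their global counterparts, and crucially the *weighted* quantity "residual weight $w(v)$" can drift: a vertex might be declared tight (and put in the cover) on one machine based on a local estimate of $\sum_{e\ni v} y_e$ that is slightly off, or conversely might fail to be recognized as tight. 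The fix is a martingale/concentration argument: because each part has polylog degree and the edge-variable increments per round are tiny relative to $w_0(v)$, the sum $\sum_{e\ni v} y_e$ concentrates around its mean up to a $(1\pm\epsilon)$ multiplicative error w.h.p.\ over the random partition; one then argues that every vertex placed in the cover was $(1-\epsilon)$-tight (so the cover cost is at most $(1+O(\epsilon))\sum_e y_e \le (2+O(\epsilon))\,\mathrm{OPT}$), and that the final $y$ can be scaled down by $(1+\epsilon)$ to restore exact feasibility, at the cost of one more $(1+\epsilon)$ factor. Rescaling $\epsilon$ yields the claimed $(2+\epsilon)$-approximation, and a union bound over the $O(\log\log d)$ phases and the $n$ vertices gives the high-probability guarantee. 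The delicate point throughout is that, unlike cardinality vertex cover where "tightness" is combinatorial, here it is a real-valued threshold, so the perturbation bounds must be multiplicative and uniform over all vertices simultaneously — getting the concentration to hold with only $\poly\log$ degree (not $\poly(n)$) is where the real work lies.
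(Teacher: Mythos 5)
Your high-level plan (round compression applied to a primal--dual LOCAL algorithm for weighted vertex cover) matches the paper's, but two of your concrete steps fail, and the mechanisms that actually make the weighted case work are missing. First, the preprocessing step ``a vertex incident to an enormous amount of uncovered weight becomes tight almost immediately and can be placed in the cover, charging against the optimum'' is unsound for \emph{weighted} vertex cover: a vertex of huge degree may also have huge weight $w(v)$, in which case it is neither quickly tight nor chargeable to $OPT$. The paper never reduces the maximum degree; instead the $\bigo(\log\log d)$ dependence on the \emph{average} degree comes from (a) the non-uniform initialization $x_{(u,v),0}=\min\{w(u)/d(u),\,w(v)/d(v)\}$, which makes the centralized algorithm terminate in $\bigo(\log\Delta)$ iterations independently of the weight magnitudes, (b) an edge-orientation argument bounding \emph{out}-degrees of surviving vertices, and (c) setting aside, in each phase, the nonfrozen vertices of degree below $d^{0.95}$ rather than simulating them. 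Your uniform raising of $y_e$ would give a round count depending on $\log(Wn)$ with $W$ the maximum weight, not on $\log\Delta$.

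Second, your concentration claim (``each part has polylog degree and the edge-variable increments per round are tiny relative to $w_0(v)$, so $\sum_{e\ni v}y_e$ concentrates'') does not hold as stated: with any initialization that yields weight-independent termination, the edge incident to a low-degree neighbor $u$ carries weight about $w(u)/d(u)$, which need not be small relative to $w(v)$, so the estimator for $v$ has large variance; this is precisely why the paper excludes low-degree vertices from the simulation and bounds per-edge weights by $w'(v)/d(v)$ via the orientation. Moreover, because tightness is a real-valued threshold, a vertex whose local estimate lands within the concentration error of a \emph{fixed} threshold can be frozen inconsistently with the centralized run, and this error then propagates to its neighbors across the many compressed iterations of a phase. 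The paper controls this with \emph{random} thresholds $\mathcal{T}_{v,t}\in[1-4\eps,1-2\eps]$ (so each vertex ``goes bad'' with probability $O(\sigma/\eps)$ independently), a deliberately \emph{biased} one-sided estimator, and an explicit induction showing the accumulated discrepancy grows only like $m^{-0.2}\cdot 15^t$ over $t$ iterations. Your proposal acknowledges this is ``where the real work lies'' but supplies no mechanism for it, so the correctness argument is not established.
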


We emphasize that our round complexity is a function of the \emph{average} degree $d$, instead of the maximum degree $\Delta$.
We are not aware of any prior work in MPC where the round complexity is a function of the average degree.

Let us briefly discuss the method: Our algorithm follows a similar outline with the algorithm by Ghaffari~et~al.~\cite{Ghaffari2018ImprovedMP} for the unweighted case, and more generally the \emph{round compression} idea introduced by Czumaj et al. \cite{Czumaj2018RoundCF}, but with some crucial and important changes that allow us to handle the weighted case.

The general round compression technique works roughly as follows: randomly partition the vertices among a small set of machines and simulate many iterations of a suitable LOCAL algorithm on the induced subgraphs in these machines. Then communicate the results and repeat for a few steps. The power lies in the possibility to simulate up to a constant fraction of the iterations of the LOCAL algorithm, by just working on the randomly partitioned graph and without any further communication among the machines, since the error incurred from neglecting cross-partition interactions is very small (which can be shown using concentration inequalities). 

Compared to the unweighted case, it is more difficult to analyze the behaviour of the progress that is being made during the algorithm, simply because there is an additional factor that influences the behaviour of a vertex, namely its weight. The behaviour of a vertex in the unweighted setting is completely dependent on its degree, which is moreover possible to estimate after taking a random subgraph. This property does not hold when vertex weights are added---due to the inherent deviations in the related random variables---and hence additional ideas are needed to make sure that we can still simulate the vertices well and moreover, that our algorithm makes enough progress in each step. 

After introducing the necessary background, in Section~\ref{sec3.2} we will give an overview of our novel ideas for resolving the above-mentioned issues in the weighted case.
We suspect that some of the ideas presented there for handling weights in round compression might find applications in other instances of round compression, where we have to deal with weights (and where natural sampling ideas face deviation issues).

\medskip
\noindent\textbf{Implications for Congested Clique.}
 A very closely related model, which has received significant attention over the past few years from the distributed computing community, is the \emph{congested clique} model~\cite{Lotker2005MinimumWeightST}. This model was initially proposed to capture computing on overlay networks. The network is presented as a fully connected graph, where on each node there is a machine and the machines can communicate in an all-to-all fashion. The precise method of communication is abstracted away -- there could be a direct link between all machines or a routing protocol to make communication possible. Therefore, the communication is considered to be the bottleneck and only small messages can be sent (of size $\bigo(\log n)$, with $n$ the number of nodes in the graph). Computation proceeds again in synchronized rounds where each machine executes some local computation and then sends messages to other machines. The local memory and computation power are assumed to be unlimited. 
 
 In \cite[Theorem 3.2]{Behnezhad2018BriefAS} the authors show a two-way simulation to show that the near-linear memory MPC setting, in their words  \emph{semi-MapReduce}, is equivalent to congested clique.  Thus, our result also implies an $O(\log\log d)$ round algorithm for $(2+\eps)$ approximate minimum weight vertex cover in the congested clique model.

\medskip
\noindent \textbf{Roadmap.}
The remaining sections are structured as follows: Section~\ref{sec:prelim} presents some basic preliminaries and notation. In Section \ref{chap:algo}, we provide an in depth description of our algorithm. Then, Section \ref{chap:analysis} provides the detailed analysis of the memory requirements, round complexity, and accuracy achieved by our algorithm.
\section{Preliminaries}
\label{sec:prelim}
We denote an undirected graph $G = (V,E)$ by its vertex set $V$ and its edge set $E$, where an edge $e\in E$ is an unordered pair of vertices, e.g. $e=(u,v)$ represents an undirected edge between the vertices $u, v \in V$.
Furthermore, we denote the degree of a vertex with $d(v)$ and the maximum degree of any vertex present in the graph is $\Delta$.
In this paper we consider graphs  with vertex weights, denoted $w(u)\in \R^+$ for $u\in V$.
For a vertex subset $V'\subseteq V$, we will use the notation $E[V']$ to denote $\{e=(u,v)\in E \mid u\in V', v\in V'\}$, the edges in the induced subgraph of $V'$.
We use $E[V';V'']$ to denote $\{e=(u,v)\in E \mid u\in V', v\in V''\}$.


For asymptotic notations $\bigo, \Theta, \Omega$, an additional tilde hides polylogarithmic factors. For example, $\widetilde{\bigo}(f)$ denotes $\bigo(f\cdot \poly \log (f))$.

We will frequently use the following form of Chernoff bounds to bound the tails of a sum of independent random variables.
\begin{theorem}[Chernoff bounds]
Let $X = \sum_{i=1}^n X_i$ be the summation of independent random variables, each assuming values in $[0,1]$. 
Let $\mu = \E(X)$. Then
\begin{itemize}
    \item $\Pr(|X-\mu| \geq \delta \mu) \leq 2\exp{(-\delta^2\mu/3 )}$ for $0 \leq \delta \leq 1$.
    \item $\Pr(|X-\mu| \geq \delta \mu) \leq 2\exp{(-\delta\mu/3 )}$ for $\delta > 1$.
\end{itemize}
\end{theorem}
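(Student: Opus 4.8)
The plan is to use the standard exponential-moment (Chernoff/Bernstein) method and then reduce the two stated bounds to elementary one-variable calculus. First I would handle the upper tail $\Pr(X-\mu \geq \delta\mu)$. For any $t>0$, Markov's inequality applied to $e^{tX}$ gives $\Pr\left(X \geq (1+\delta)\mu\right) \leq e^{-t(1+\delta)\mu}\,\E[e^{tX}]$, and by independence $\E[e^{tX}] = \prod_{i=1}^n \E[e^{tX_i}]$. Since each $X_i$ takes values in $[0,1]$ and $x \mapsto e^{tx}$ is convex, we have $e^{tX_i} \leq 1+(e^t-1)X_i$ pointwise, hence $\E[e^{tX_i}] \leq 1+(e^t-1)\E[X_i] \leq \exp\left((e^t-1)\E[X_i]\right)$. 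Multiplying over $i$ yields $\E[e^{tX}] \leq \exp\left((e^t-1)\mu\right)$, so $\Pr\left(X \geq (1+\delta)\mu\right) \leq \exp\left((e^t-1)\mu - t(1+\delta)\mu\right)$. Choosing $t=\ln(1+\delta)$ (positive for $\delta>0$) gives the clean bound $\Pr\left(X \geq (1+\delta)\mu\right) \leq \left(\frac{e^\delta}{(1+\delta)^{1+\delta}}\right)^{\mu}$.

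Next I would bound the base $g(\delta) := \frac{e^\delta}{(1+\delta)^{1+\delta}}$ in the two regimes. Taking logarithms, it suffices to show that $h(\delta) := \delta - (1+\delta)\ln(1+\delta)$ satisfies $h(\delta) \leq -\delta^2/3$ for $0 \leq \delta \leq 1$ and $h(\delta) \leq -\delta/3$ for $\delta > 1$. Note $h'(\delta) = -\ln(1+\delta)$. For the first regime, consider $\phi(\delta) = h(\delta) + \delta^2/3$: then $\phi(0)=0$, $\phi'(\delta) = -\ln(1+\delta) + 2\delta/3$ with $\phi'(0)=0$, and $\phi''(\delta) = -\frac{1}{1+\delta}+\frac{2}{3}$, which is negative for $\delta<1/2$ and positive for $\delta>1/2$; since $\phi'$ starts at $0$, first decreases, then increases but remains negative up to $\delta=1$ (as $\phi'(1) = -\ln 2 + 2/3 < 0$), we get $\phi' \leq 0$ and hence $\phi \leq 0$ on $[0,1]$. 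For $\delta>1$, consider $\psi(\delta) = h(\delta) + \delta/3$: then $\psi(1) = 4/3 - 2\ln 2 < 0$ and $\psi'(\delta) = -\ln(1+\delta) + 1/3 < 0$ for all $\delta>1$, so $\psi$ stays negative. This gives $\Pr\left(X \geq (1+\delta)\mu\right) \leq e^{-\delta^2\mu/3}$ for $0 \leq \delta \leq 1$ and $\leq e^{-\delta\mu/3}$ for $\delta>1$.

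For the lower tail $\Pr\left(X \leq (1-\delta)\mu\right)$ (relevant only when $0 \leq \delta \leq 1$), I would run the same argument with $e^{-tX}$ for $t>0$: Markov gives $\Pr\left(X \leq (1-\delta)\mu\right) \leq e^{t(1-\delta)\mu}\prod_i \E[e^{-tX_i}]$, the convexity bound now reads $\E[e^{-tX_i}] \leq \exp\left((e^{-t}-1)\E[X_i]\right)$, and optimizing at $t = -\ln(1-\delta)$ yields $\Pr\left(X \leq (1-\delta)\mu\right) \leq \left(\frac{e^{-\delta}}{(1-\delta)^{1-\delta}}\right)^\mu$. Writing $\chi(\delta) = -\delta - (1-\delta)\ln(1-\delta) + \delta^2/2$ one checks $\chi(0)=0$ and $\chi'(\delta) = \ln(1-\delta) + \delta \leq 0$ (using $\ln(1-\delta) \leq -\delta$), so $\chi \leq 0$ on $[0,1)$, giving $\Pr\left(X \leq (1-\delta)\mu\right) \leq e^{-\delta^2\mu/2} \leq e^{-\delta^2\mu/3}$. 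Adding the upper- and lower-tail estimates via a union bound produces the factor $2$ in both displayed inequalities. The only genuinely delicate part is the calculus in the second paragraph — verifying that the constant $1/3$ works uniformly on all of $[0,1]$ and on $(1,\infty)$ — since $\phi$ is not monotone and one must track the sign of $\phi'$ across its single interior minimum rather than rely on a single Taylor estimate; everything else is routine.
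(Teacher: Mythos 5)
Your argument is correct: the exponential-moment bound with the convexity estimate $e^{tX_i}\le 1+(e^t-1)X_i$, the optimization $t=\ln(1+\delta)$, and the calculus verifying $\delta-(1+\delta)\ln(1+\delta)\le -\delta^2/3$ on $[0,1]$ and $\le -\delta/3$ for $\delta>1$ (plus the lower-tail analogue and a union bound) all check out. The paper does not prove this theorem at all—it states it as a standard fact and defers to the cited reference of Dubhashi and Panconesi—and your proof is essentially the standard textbook argument that reference contains, so there is nothing to reconcile.
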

More information about the background and applications of these Chernoff bounds can be found in \cite{Dubhashi2009ConcentrationOM}.

\section{The Algorithm}
\label{chap:algo}
As discussed earlier in the introduction, our MPC algorithm follows the framework of \cite{Ghaffari2018ImprovedMP} and uses the powerful \emph{round compression} technique first introduced in \cite{Czumaj2018RoundCF}. Recall that a critical part of such an MPC algorithm is a centralized/LOCAL algorithm that allows for efficient simulation under random sampling.

In the following section we describe the centralized algorithm we will use and analyse its approximation guarantee. 
Then in the next section we give an overview of our MPC algorithm and highlight the differences with the algorithm for the unweighted case \cite{Ghaffari2018ImprovedMP}. In the last section we outline the full MPC algorithm.

\subsection{Centralized Algorithm}
\label{sec3.1}
We first describe a centralized algorithm for computing a $(2+\epsilon)$-approximate weighted vertex cover, using the standard primal-dual framework; this approach can be traced back to the first studies on approximating the vertex cover problem \cite{hochbaum1982approximation, bar1981linear}. 


\begin{figure}
\begin{center}
\begin{tabular}{c c c c c c}
\label{int program}
& Primal & \\
$\min$ & $\sum_{v\in V} z_v \cdot w(v)$ & \\
s.t. & $z_u + z_v \geq 1$ &  $\forall (u,v) \in E$ \\
& $z_v \geq 0$ &  $\forall v \in V$ \\
\\
& Dual & \\
$\max$ & $\sum_{e \in E} x_e $ &\\
s.t. & $\sum_{e\ni v} x_e \leq w(v)$ & $\forall v \in V$ \\
&$ x_e \geq 0$ & $\forall e \in E$\\
\end{tabular}
\end{center}
\caption{Linear programming relaxation for MWVC}
\label{fig:primal dual}
\end{figure}


We maintain the dual variables $\{x_e\}_{e\in E}$ which form a valid \emph{fractional matching}, i.e., they satisfy the dual constraints $\sum_{e \ni v} x_e  \le w(v)$ for all $v\in V$. 
Every vertex has a status of being \emph{active} or \emph{frozen}, indicating whether this vertex is still participating in the algorithm.
We say an edge is \emph{active}, if and only if both of its endpoints are active.
We start with a valid fractional matching, and set all vertices to be \emph{active}.
Then, we slowly increase the dual variable $x_e$ of every active edge $e$, while not violating the dual constraints. When the dual constraint of a vertex becomes near-tight, we freeze this vertex and include it in our vertex cover solution.
In the end, we can show by weak LP-duality that this solution is indeed a $(2+O(\epsilon))$-approximation.

Algorithm \ref{algo:local det mwvc} implements such a primal-dual scheme. 
The choices of the initial weights $x_e$ and the thresholds $\mathcal{T}_{v,t}$ are not specified in the description. We will specify them later when we simulate this centralized algorithm in the MPC model and compare their behaviour. Next, we analyze the approximation guarantee of this centralized algorithm.

\begin{algorithm}
\SetAlgoLined
\begin{enumerate}
    \item Input: graph $G=(V,E)$, weight function $w: V\to\R^+$
    \item Initialization:
         let $\{x_{e,0}\}$ ($x_{e,0}>0$ for all $e\in E$) be an arbitrary valid fractional matching
    \item Let $\mathcal{T}_{v,t}$ be arbitrary numbers from interval $[1-4\eps,1-2\eps]$, for all $v\in V$ and integers $t\ge 0$
    \item While at least one edge is active, iterate $t\gets 0,1,\dots$:
    \begin{enumerate}
        \item For each active vertex $v$ satisfying $y_{v,t} := \sum_{e \ni v} x_{e,t} \geq \mathcal{T}_{v,t} \cdot w(v)$: freeze $v$ and its incident edges \label{algoline:freeze central}
        \item For each active edge $e$: $x_{e,t+1} := x_{e,t}/(1-\epsilon)$
        \item For each frozen edge $e$: $x_{e,t+1} := x_{e,t}$
    \end{enumerate}
    \item Return all frozen vertices as a vertex cover
\end{enumerate}
\caption{A generic centralized MWVC algorithm}
\label{algo:local det mwvc}
\end{algorithm}

\begin{observation}
\label{obs:central valid}
For all $t\ge 0$, the dual constraint $\sum_{e\ni v}x_{e,t} \le w(v)$ is satisfied for all $v\in V$. In other words, Algorithm~\ref{algo:local det mwvc} maintains a valid fractional matching. 
\end{observation}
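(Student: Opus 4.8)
The plan is to establish the stronger running invariant $y_{v,t} := \sum_{e \ni v} x_{e,t} \le w(v)$ for every $v \in V$ and every $t \ge 0$, by induction on $t$. The base case $t = 0$ is immediate: the initialization picks $\{x_{e,0}\}$ to be a valid fractional matching, which is exactly the statement $y_{v,0} \le w(v)$ for all $v$.

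For the inductive step, I would assume $y_{v,t} \le w(v)$ for all $v$ and fix an arbitrary vertex $v$, splitting into two cases according to the status of $v$ immediately after the freezing step~\ref{algoline:freeze central} of iteration $t$. If $v$ is frozen at that point (whether it was frozen in an earlier iteration or just became frozen), then by the convention that an edge is active iff \emph{both} endpoints are active, every edge $e \ni v$ is frozen as well; hence step (c) applies to all such edges and $x_{e,t+1} = x_{e,t}$. Consequently $y_{v,t+1} = y_{v,t} \le w(v)$, so the invariant is preserved for $v$. (Note this also covers a vertex that just froze because $y_{v,t} \ge \mathcal{T}_{v,t} w(v)$: the upper bound $y_{v,t} \le w(v)$ is supplied by the induction hypothesis.)

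If instead $v$ is still active after step~\ref{algoline:freeze central}, then $v$ failed the freezing condition in that step, which gives $y_{v,t} < \mathcal{T}_{v,t}\, w(v) \le (1-2\epsilon)\, w(v)$ since $\mathcal{T}_{v,t} \le 1 - 2\epsilon$. Each edge $e \ni v$ is updated by step (b) or step (c), and in both cases $x_{e,t+1} \le x_{e,t}/(1-\epsilon)$ (for step (b) this is equality; for step (c) it uses $x_{e,t} \ge 0$ and $0 < \epsilon < 1$). Summing over $e \ni v$,
\[
 y_{v,t+1} \;\le\; \frac{y_{v,t}}{1-\epsilon} \;<\; \frac{1-2\epsilon}{1-\epsilon}\, w(v) \;\le\; w(v),
\]
where the last inequality is the elementary fact $1-2\epsilon \le 1-\epsilon$. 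This closes the induction.

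There is no genuine obstacle here; the proof is a short induction. The only two points that need care are (i) remembering that once a vertex freezes, all of its incident edges are frozen and therefore are \emph{not} scaled up in step (b), even if their other endpoint is still active, and (ii) observing that the threshold window $[1-4\epsilon,\,1-2\epsilon]$ is calibrated precisely so that one further multiplicative increase of an active vertex's dual load by $1/(1-\epsilon)$ cannot exceed $w(v)$. Both are dispatched by the case analysis above.
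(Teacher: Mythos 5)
Your proposal is correct and follows essentially the same route as the paper: induction on $t$, with the same case split on whether $v$ is frozen or active after the freezing step, the same use of the failed threshold test $y_{v,t} < \mathcal{T}_{v,t}\,w(v) \le (1-2\epsilon)w(v)$, and the same chain $y_{v,t+1} \le y_{v,t}/(1-\epsilon) < w(v)$. The extra remarks about frozen vertices keeping all incident edges frozen are just a more explicit rendering of the paper's one-line frozen case.
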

\begin{proof} We do a proof by induction on $t$.
The validity for $t=0$ is ensured by the initialization requirement. For the inductive step, assume that $\{x_{e,t}\}_{e\in E}$ is a valid fractional matching. By the start of iteration $t+1$,
for any active vertex $v$,
\begin{equation*}
\sum_{e \ni v}x_{e,t+1} \leq \sum_{e \ni v}\frac{x_{e,t}}{ 1-\epsilon} < \frac{\mathcal{T}_{v,t} w(v)}{1-\eps} \le \frac{(1-2\epsilon)w(v)}{(1-\epsilon)} < w(v).
\end{equation*}
where the second inequality follows from the fact that vertex $v$ was not frozen in iteration $t$.
 For any frozen vertex $v$,
 \[
\sum_{e \ni v}x_{e,t+1}  = \sum_{e \ni v}x_{e,t} \le w(v).
\qedhere
\]
\end{proof}

\begin{lemma}[Weak LP-duality]
Let $OPT$ be the total weight of the minimum weight vertex cover $C^*$ of graph $G=(V,E)$, and $\{x_e\}_{e\in E}$ be any fractional matching of $G$. Then $OPT \ge \sum_{e\in E}x_e$.
\end{lemma}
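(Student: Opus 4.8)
The plan is to prove this by a standard charging/double-counting argument that relates the fractional matching value to the cost of any vertex cover. Let $C^*$ be the minimum weight vertex cover with total weight $OPT = \sum_{v \in C^*} w(v)$, and let $\{x_e\}_{e\in E}$ be an arbitrary fractional matching, i.e.\ $\sum_{e \ni v} x_e \le w(v)$ for all $v \in V$. The key observation is that since $C^*$ is a vertex cover, every edge $e \in E$ has at least one endpoint in $C^*$. Hence, for each edge $e$ we may pick an endpoint $\phi(e) \in C^* \cap e$, which gives a map $\phi : E \to C^*$.

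Next I would chain the following inequalities:
\[
\sum_{e\in E} x_e \;=\; \sum_{v\in C^*} \sum_{\substack{e \in E:\ \phi(e)=v}} x_e \;\le\; \sum_{v\in C^*} \sum_{e \ni v} x_e \;\le\; \sum_{v\in C^*} w(v) \;=\; OPT.
\]
The first equality just regroups the sum over edges according to which endpoint $\phi$ assigned them (every edge is counted exactly once, since $\phi$ is a function). The first inequality uses $x_e \ge 0$ together with the fact that $\{e : \phi(e) = v\} \subseteq \{e : e \ni v\}$, so we are only adding more nonnegative terms. The second inequality is precisely the fractional matching (dual feasibility) constraint applied at each vertex $v \in C^*$. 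The final equality is the definition of $OPT$. Combining gives $\sum_{e\in E} x_e \le OPT$, which is the claim.

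There is no real obstacle here; the only thing to be careful about is making sure the map $\phi$ is well-defined (pick, say, the lexicographically smaller endpoint lying in $C^*$ when both endpoints are in $C^*$) so that the regrouping in the first equality is an exact partition of $E$ rather than an overcount — an overcount would still be fine for the inequality direction we need, but it is cleanest to have an honest partition. It is also worth noting that this argument uses nothing about Algorithm~\ref{algo:local det mwvc} specifically; it applies to any fractional matching, which is exactly what the lemma asserts, and in particular it will later be applied to the fractional matching $\{x_{e,t}\}$ maintained by the algorithm (valid by Observation~\ref{obs:central valid}) to conclude the $(2+O(\epsilon))$-approximation bound.
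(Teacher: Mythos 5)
Your proof is correct and follows essentially the same charging argument as the paper: the paper simply writes $\sum_{v\in C^*}\sum_{e\ni v}x_e \ge \sum_{e\in E}x_e$ directly (accepting the harmless overcount of edges with both endpoints in $C^*$), whereas you make the assignment explicit via the map $\phi$. The two arguments are the same in substance.
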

\begin{proof}
Observe that \[OPT = \sum_{v\in C^*} w(v) \ge \sum_{v\in C^*} \sum_{u:(u,v)\in E} x_{(u,v)} \ge \sum_{e\in E} x_{e}.\qedhere\]
\end{proof}
\begin{proposition}
\label{prop:central apx}
When Algorithm \ref{algo:local det mwvc} terminates, it returns a vertex cover $C$ which satisfies 
$$ w(C) \leq (2+10\epsilon) OPT,$$
where $OPT$ is the weight of a minimum weight vertex cover.
\end{proposition}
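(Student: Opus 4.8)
The plan is to apply the just-stated Weak LP-duality lemma to the fractional matching formed by the \emph{final} dual variables, i.e.\ the values $x_{e,t}$ once the algorithm has terminated; denote these by $\{x_e\}_{e\in E}$. First I would record that the algorithm does terminate: if an edge $e=(u,v)$ stayed active at every iteration $t$, then $x_{e,t}=x_{e,0}/(1-\epsilon)^t\to\infty$, contradicting Observation~\ref{obs:central valid}, which gives $x_{e,t}\le \sum_{e'\ni u}x_{e',t}\le w(u)$ (this in fact bounds the number of iterations by $O(\epsilon^{-1}\log(w_{\max}/x_{\min}))$, using $x_{e,0}>0$). I would also note in passing that $C$ is a legitimate vertex cover: an edge becomes frozen only through line~\ref{algoline:freeze central}, at which moment one of its endpoints is frozen; since every edge is frozen at termination, every edge has a frozen—hence covered—endpoint.

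The key structural observation is that once a vertex $v$ is frozen, say in iteration $t_v$, all edges incident to $v$ are frozen from that iteration onward, so their dual variables never change again; hence for every $e\ni v$ the final value $x_e$ equals $x_{e,t_v}$. Combining this with the freezing condition $y_{v,t_v}=\sum_{e\ni v}x_{e,t_v}\ge \mathcal{T}_{v,t_v}\,w(v)$ and the lower bound $\mathcal{T}_{v,t_v}\ge 1-4\epsilon$ yields, for every $v\in C$,
\[
w(v)\ \le\ \frac{1}{1-4\epsilon}\sum_{e\ni v}x_{e,t_v}\ =\ \frac{1}{1-4\epsilon}\sum_{e\ni v}x_e .
\]

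Summing this over all $v\in C$ and using that each edge has two endpoints (so it appears in at most two of the inner sums),
\[
w(C)\ =\ \sum_{v\in C}w(v)\ \le\ \frac{1}{1-4\epsilon}\sum_{v\in C}\sum_{e\ni v}x_e\ \le\ \frac{2}{1-4\epsilon}\sum_{e\in E}x_e .
\]
Since $\{x_e\}$ is a valid fractional matching by Observation~\ref{obs:central valid}, Weak LP-duality gives $\sum_{e\in E}x_e\le OPT$, so $w(C)\le \frac{2}{1-4\epsilon}\,OPT$. A routine computation shows $\frac{2}{1-4\epsilon}\le 2+10\epsilon$ whenever $\epsilon\le \tfrac1{20}$ (equivalently $(2+10\epsilon)(1-4\epsilon)=2+2\epsilon-40\epsilon^2\ge 2$), which finishes the proof; this is no loss since $\epsilon$ is a free parameter that we may take small.

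The main obstacle—really the only point requiring care—is reconciling the time-varying dual variables with a single fractional matching: one must apply weak duality to the \emph{final} variables and argue that, for each covered vertex, the near-tightness witnessed at its freezing time is exactly preserved in those final variables (because freezing a vertex freezes all its incident edges). Everything else is bookkeeping and a one-line inequality in $\epsilon$.
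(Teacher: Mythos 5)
Your proposal is correct and follows essentially the same route as the paper: lower-bound $\sum_{e\ni v}x_e$ by $(1-4\epsilon)w(v)$ for each frozen $v$ via the threshold condition, double-count edges over the two endpoints, and close with weak LP-duality and the elementary bound $\tfrac{2}{1-4\epsilon}\le 2+10\epsilon$ for small $\epsilon$. The only difference is that you make explicit two points the paper leaves implicit — that the final value $x_e$ equals $x_{e,t_v}$ for every edge incident to a vertex frozen at time $t_v$, and that the last inequality needs $\epsilon\le 1/20$ — which is a welcome bit of extra care rather than a deviation.
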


\begin{proof}
We first claim that the returned set of frozen vertices forms a valid vertex cover. This follows from the fact that the algorithm only terminates when all edges have been frozen, i.e. when they contain at least one vertex that is frozen. Therefore the set of frozen vertices covers all the edges.

Next we will relate the weight of the fractional matching to the size of the vertex cover and use LP-duality to prove the claimed approximation ratio. Denote the value of the final fractional matching by $W_M = \sum_{e\in E} x_e$. For every vertex $v$ in the returned vertex cover $ C$, 
$$
 y_v = \sum_{e \ni v} x_e \geq \mathcal{T}_{v,t^*}w(v) \ge (1-4\epsilon)w(v),
$$
where $t^*$ is the iteration in which $v$ became frozen.
As each edge can be covered at most twice, once per endpoint, we have
$$
2 W_M = 2\sum_{e\in E} x_e \geq \sum_{v \in C} \sum_{e \ni v} x_e \geq (1-4\epsilon) w(C),
$$
where $w(C)$ denotes $\sum_{v\in C}w(v)$. 
Then, by weak LP-duality,
\begin{equation*}
w(C) \leq \frac{2}{1-4\epsilon}W_M \leq \frac{2}{1-4\epsilon} OPT<(2+10\eps)OPT. \qedhere
\end{equation*}
\end{proof}

%
%

\subsection{Overview}
\label{sec3.2}
Before presenting our MPC algorithm, let us briefly review the algorithm for the unweighted case from Ghaffari et al. \cite{Ghaffari2018ImprovedMP}. 

\medskip
\noindent\textbf{A recap on the approach of Ghaffari et al. \cite{Ghaffari2018ImprovedMP}.}  Their algorithm proceeds in \emph{phases}, where the machines only communicate after each phase.
At the start of a phase, the vertices are partitioned uniformly at random among $m=\sqrt{\delta}$ machines, where $\delta$ is an upper bound on the current maximum degree (in the induced subgraph of nonfrozen vertices). Next, each machine gathers the induced subgraph on the vertices it received and simulates the LOCAL primal-dual algorithm (Algorithm~\ref{algo:local det mwvc}, with $w(v) = 1, \forall v \in V$) on this subgraph, where 
the initialization of the fractional matching is taken as $x_{e,0} = 1/n$.
This LOCAL algorithm is simulated by only inspecting the \emph{local} neighborhood of each vertex, i.e. the neighbors that landed on \emph{the same machine}.
When checking the dual constraints (Line (\ref{algoline:freeze central}) of Algorithm \ref{algo:local det mwvc}), the algorithm uses the (scaled) total weight of incident edges from local neighbors, which is an unbiased estimate of the total incident weight on the full graph.
These estimates are sharply concentrated, and hence through the use of \emph{random thresholds} $\mathcal{T}_{v,t}$, with good probability, the behaviour of this simulation is very close to the behaviour of the LOCAL algorithm on the full graph for all iterations.\footnote{We refer readers to {\cite[Section~4.2]{Ghaffari2018ImprovedMP}} for more intuition on their random thresholding technique and a discussion of its necessity.} 
Their algorithm proceeds by reducing the maximum degree until $\delta < \poly\log n$, at which point the algorithm terminates in one more step solving the remaining instance (with only $\tilde O(n)$ edges) on one machine.\\

Our algorithm for the weighted case uses the framework of Ghaffari~et~al.~\cite{Ghaffari2018ImprovedMP}, but has a few key differences. In the following, we give a high-level description of our new techniques.

\medskip \noindent\textbf{Non-uniform initialization of edge weights.}  First of all, we use a different initialization of the LOCAL algorithm. Instead of using the standard initialization $x_{(u,v)}=1/n$, we use $x_{(u,v)} := \min\left \{\frac{w(v)}{d(v)}, \frac{w(u)}{d(u)}\right \}$.\footnote{The actual initialization used in our MPC simulation is slightly different regarding the definition of residual degrees $d(u)$, for technical reasons (see Remark~\ref{remark:deg}).}
This non-standard initialization will be crucial to the analysis of our MPC algorithm.
We first give a succinct analysis of this initialization in the centralized setting. 

\begin{proposition}
The initialization $x_{(u,v),0} := \min\left \{\frac{w(v)}{d(v)}, \frac{w(u)}{d(u)}\right \}$ is valid. Moreover, Algorithm~\ref{algo:local det mwvc} terminates after $\bigo(\log \Delta)$ iterations under this initialization.
\label{prop:central initialization}
\end{proposition}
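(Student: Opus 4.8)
The plan is to verify the two claims separately. For validity, I need to check that the initial fractional matching $x_{(u,v),0} := \min\{w(v)/d(v), w(u)/d(u)\}$ satisfies every dual constraint $\sum_{e\ni v} x_{e,0} \le w(v)$. Fix a vertex $v$. For each incident edge $e=(u,v)$, the chosen value is at most $w(v)/d(v)$ by the minimum. Summing over the $d(v)$ edges incident to $v$ gives $\sum_{e\ni v} x_{e,0} \le d(v)\cdot \frac{w(v)}{d(v)} = w(v)$, as required. (One also needs $x_{e,0}>0$, which holds because weights are positive and degrees are finite, assuming no isolated vertices — vertices of degree $0$ are irrelevant to vertex cover and can be discarded.)

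For the round complexity, the key observation is that the dual variable on an active edge grows geometrically: after $t$ iterations in which $e$ has stayed active, $x_{e,t} = x_{e,0}/(1-\epsilon)^t$. I would track, for a fixed vertex $v$, the quantity $y_{v,t} = \sum_{e\ni v} x_{e,t}$. As long as $v$ is active, every incident edge has either been frozen (its value stays fixed) or is still active (its value has been multiplied by $(1-\epsilon)^{-t}$ since initialization). In either case $x_{e,t} \ge x_{e,0}$, so $y_{v,t} \ge \sum_{e\ni v} x_{e,0}$; but more usefully, I want a lower bound that forces $v$ to freeze. Consider the edge $e^\star$ incident to $v$ that maximizes $x_{e^\star,0}$; among $v$'s incident edges, $\max_e x_{e,0} \ge \frac{1}{d(v)}\sum_{e\ni v} x_{e,0}$ is too weak, so instead I use that at least one incident edge $e=(u,v)$ has $x_{e,0} \ge \min\{w(v)/d(v), w(u)/d(u)\}$, and in particular some incident edge has value at least $w(v)/(d(v)\cdot\text{something})$ — cleaner: pick the neighbor $u$ of $v$ with the largest $w(u)/d(u)$ is not what I want either. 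Let me restart this step: the simplest route is to note $x_{(u,v),0}=\min\{w(v)/d(v),w(u)/d(u)\}$, so for the edge to a neighbor $u$ with smallest ratio $w(u)/d(u)$ the value could be small, but there must exist \emph{some} incident edge with $x_{(u,v),0} \ge$ a decent fraction — actually the right claim is simpler: at least one incident edge $e$ has $x_{e,0} \ge w(v)/(2d(v))$ is not guaranteed. The correct and clean statement is: there is an incident edge with $x_{e,0}\ge \min_u\{\dots\}$, but I only need \emph{one} edge with value $\Omega(w(v)/d(v))$ divided by $\Delta$ at worst — no. Let me just use: since $x_{(u,v),0} = \min\{w(v)/d(v), w(u)/d(u)\} \ge \frac{w(v)}{d(v)}\cdot\frac{w(u)/d(u)}{w(u)/d(u) + w(v)/d(v)}$ is overcomplicating.

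The honest clean argument: consider the largest initial edge weight incident to $v$, call it $\mu_v := \max_{e\ni v} x_{e,0}$; I claim $\mu_v \ge w(v)/d(v)$ is false in general (all neighbors could have tiny ratio), but $\mu_v \ge$ (value of the edge to the neighbor $u^\star$ with largest $w(u)/d(u)$) $= \min\{w(v)/d(v), w(u^\star)/d(u^\star)\}$, and if $w(u^\star)/d(u^\star) \ge w(v)/d(v)$ then $\mu_v \ge w(v)/d(v)$. The remaining case is that \emph{every} neighbor $u$ has $w(u)/d(u) < w(v)/d(v)$, i.e., $x_{(u,v),0} = w(u)/d(u)$ for all of them; but then on the \emph{other} endpoint $u$, the edge $(u,v)$ has value exactly $w(u)/d(u)$, which is the maximum possible value for any edge at $u$ (it equals $w(u)/d(u)$, and every edge at $u$ has value $\le w(u)/d(u)$). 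So: \emph{for every edge $e=(u,v)$, on at least one of its two endpoints, say $v$, we have $x_{e,0} \ge w(v)/d(v)$, hence $x_{e,0}/w(v)\ge 1/d(v)\ge 1/\Delta$}. Now run the growth argument from the perspective of that endpoint: while $v$ is active, $x_{e,t} \ge x_{e,0}/(1-\epsilon)^t$ if $e$ is still active, and once $x_{e,0}/(1-\epsilon)^t \ge w(v)$ — i.e., after $t = O(\log_{1/(1-\epsilon)}(w(v)/x_{e,0})) = O(\epsilon^{-1}\log\Delta)$ iterations — the single edge $e$ already makes $y_{v,t}\ge \mathcal{T}_{v,t}w(v)$ impossible to avoid: $v$ must have frozen by then (or $e$ was frozen earlier because its \emph{other} endpoint $u$ froze, in which case $e$ is no longer active and is covered). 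Either way, after $O(\epsilon^{-1}\log\Delta)$ iterations every edge is frozen, so the algorithm terminates. Treating $\epsilon$ as a constant gives $O(\log\Delta)$ iterations.

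The main obstacle is the little case analysis in the previous paragraph — making precise the statement that every edge, viewed from the "right" endpoint, has initial value at least $1/\Delta$ times that endpoint's weight, and then arguing that this single heavy edge (if it stays active) forces freezing in logarithmically many steps, while if it gets frozen earlier that is also fine since a frozen edge is covered. Once that dichotomy is set up cleanly, the geometric-growth bookkeeping $x_{e,t} = x_{e,0}(1-\epsilon)^{-t}$ and the resulting $t = O(\epsilon^{-1}\log\Delta)$ bound are routine.
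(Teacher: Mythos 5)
Your proof is correct and follows essentially the same route as the paper: validity by summing the $d(v)$ incident edge values, each bounded by $w(v)/d(v)$, and termination by observing that every edge, viewed from the endpoint achieving the minimum, starts at weight at least $1/\Delta$ times that endpoint's weight and so would violate the dual constraint (hence must be frozen) within $\log_{1/(1-\eps)}\Delta$ iterations. The visible false starts in your middle paragraph should be deleted; the dichotomy you finally settle on is exactly the paper's ``w.l.o.g.\ assume $x_{e,0}=w(u)/d(u)$'' step.
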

\begin{proof}
For every vertex $v$, $\sum_{e \ni v}x_{e,0} \leq d(v) \cdot \frac{w(v)}{d(v)} = w(v).$

For the running time, consider any edge $e=(u,v)$ and w.l.o.g. assume its initial weight is $x_{e,0} = \frac{w(u)}{d(u)}$. If $e$ is active after $\log_{(1/(1-\epsilon))}(\Delta)$ iterations, we have that $x_e \geq  w(u)$, which violates the dual constraint and cannot happen. Hence, the algorithm terminates after 
$\log_{(1/(1-\epsilon))}(\Delta) \in \bigo(\log \Delta)
$ iterations.
\end{proof}

One can see that the standard initial assignment $x_{e,0} = 1/n$ also yields a correct LOCAL algorithm for the weighted vertex cover problem, assuming the weights of vertices are rescaled so that $w(v)\ge 1$.
However, the running time of this LOCAL algorithm would depend on the size of the maximum vertex weight: $\bigo(\log(Wn))$ with $W =\max_{v\in V}w(v)$, which is undesirable.

One might want to use $x_{(u,v)}:=\min\{\frac{w(v)}{\Delta},\frac{w(u)}{\Delta}\}$ instead of $x_{(u,v)}:=\min\{\frac{w(v)}{d(v)},\frac{w(u)}{d(u)}\}$ for initialization. The former has smaller weights, and hence causes the primal-dual algorithm to make progress slower, though this difference is not obvious in the LOCAL model---the former initialization achieves the same time bound as stated in Proposition~\ref{prop:central initialization}. However, when performing MPC simulation of the LOCAL algorithm, we could only achieve $O(\log \log \Delta)$ round complexity when using the former way of initialization.
Using the latter one, we can achieve round complexity $O(\log \log d)$ (where $d$ is the average degree) as claimed in the main result.

\medskip \noindent\textbf{Analysis of progress via orienting edges.} 
An integral part of applying round compression is that the graph gets sparsified, and hence we need to quantify this progress.
In Ghaffari~et~al.'s unweighted algorithm~\cite{Ghaffari2018ImprovedMP},  each nonfrozen edge has the same weight $x_t = (1/n)/(1-\eps)^t$, so the number of nonfrozen neighbors of any vertex is upper bounded by $1/x_t$, which gives a natural characterization of the sparsity of the remaining graph.
For the weighted case, that characterization of progress does not hold anymore, due to our non-uniform initialization. 

Instead, we will use an \emph{orientation} argument. We orient every edge $(u,v)$ from $u$ to $v$ if $\frac{w(u)}{d(u)}<\frac{w(v)}{d(v)}$. Then, since every edge $e$ outward from vertex $u$ has initial weight equal to $\frac{w(u)}{d(u)}$, we can give a natural upper bound on the \emph{out-degree} of $u$,  and analyze the out-degree shrinking over time.
Although we do not have control on the (undirected) degrees of vertices, our upper bounds on the out-degrees still allow us to bound the total number of remaining edges and measure the progress of the algorithm.

\medskip \noindent\textbf{Not simulating low degree vertices.} 
Recall that the LOCAL algorithm needs to be simulated after taking a random sample of the vertices, and the accuracy of the simulation relies on necessary concentration bounds of the incident edge weight for each vertex. 
In the weighted case, low degree vertices can cause difficulties for proving such concentration, since (1) they can have big initial weights (due to our non-standard initialization), which introduce large deviation; and, (2) the sampling rate is not enough for their neighborhood. (For the algorithm of the unweighted case this is not an issue, since a low degree vertex cannot become frozen during the early stages of the LOCAL algorithm) 


To alleviate the issue we divide the vertices in two classes at the start of a phase: in $V^{high}$ are all the nonfrozen vertices with a \emph{high degree}, defined as vertices $v$ such that $d(v) \geq d^{0.95}$, where $d(v)$ is the degree of vertex $v$ with respect to nonfrozen neighbors only,  and $d:=\frac{1}{n}\sum_{v\in V\text{ nonfrozen}} d(v)$.\footnote{Note that $d$ is not quite the ``average'' degree; the denominator is always $n$ regardless of the number of nonfrozen vertices $v$.} 
The other nonfrozen vertices of low degree are called \emph{inactive}, and are gathered in $V^{inactive}$.
Then only the vertices in $V^{high}$ are partitioned and simulated in this phase.
In the analysis we will see that despite only simulating a subset of the vertices the algorithm still makes enough progress in one phase on reducing the average degree to reach an overall running time of $\bigo(\log\log d)$ MPC rounds.


\medskip \noindent\textbf{Other changes in our analysis.} 
Since we need to deal with weights and degrees in  our sampling/simulation arguments, the required concentration bounds are more delicate than the previous work~\cite{Ghaffari2018ImprovedMP}. 

To simplify some parts of the analysis, we make another modification of the algorithm. Recall that in Ghaffari~et~al.'s \cite{Ghaffari2018ImprovedMP} algorithm, the (scaled) total incident weight of local neighbors was used as an estimate of the actual total incident weight on the full graph. 
This estimate is unbiased, and could have error on either of the two directions. One of them is easy to deal with, while the other one requires a much more difficult analysis (see the discussion in \cite[Section 4.4.4]{Ghaffari2018ImprovedMP} on ``late-bad vertices''). In our algorithm, we 
simply introduce a bias term to the estimator (Line~\ref{algoline:inner freeze} of Algorithm~\ref{algo:mpc mwvc}), so that with high probability it only has one-sided error comparing to the actual total incident weight.
This will make the analysis easier when we compare the behaviour of our MPC simulation with the centralized algorithm in Section~\ref{section:approx ratio}.
 \begin{algorithm}
 \SetAlgoLined
 \begin{enumerate}
     \item Input: graph $G=(V,E)$, weight function $w: V\to\R^+$
     \item While $d:= \frac{1}{n}\sum_{v\in V\text{ nonfrozen}} d(v) > \log^{30} n$: \label{algoline:phase}
     \begin{enumerate}
         \item Let $V^{high} \gets \{v \text{ nonfrozen}\mid d(v) \geq d^{0.95}\}$, $V^{inactive} \gets \{v \text{ nonfrozen}\mid d(v) < d^{0.95}\}$ 
         \label{algoline:alive vert}
         \item Compute residual weights for all $v\in V^{high}$: $w'(v) \gets w(v) - \sum_{e \ni v, \text{ frozen}} x_e^{MPC}$
         \label{algoline:residual weight}
         \item Initial edge weights for all $e=(u,v) \in E[V^{high}]$: $x_{e,0}^{MPC} :=  \min \left \{ \frac{w'(u)}{d(u)}, \frac{w'(v)}{d(v)}\right \}$
         \item Let $\mathcal{T}_{v,t}$ be independent random numbers uniformly chosen from $[1-4\eps,1-2\eps]$, for all $v\in V^{high}$ and $0\le t< I$
         \item Set number of machines $m:=\sqrt{d}$, and number of iterations $I:=\frac{\log m}{10\log 15}$ \label{algoline:nb machines}
         \item Partition $V^{high}$ into $m$ sets $V_1, \dots ,V_m$ by assigning each vertex to a machine independently and uniformly at random \label{algoline:def vi}
         \item For each $i \in \{1, \dots,m\}$ in parallel, iterate $t\gets 0,1,\dots, I-1$:
         \begin{enumerate}
             \item For each active $v \in V_i$ satisfying  $\tilde{y}^{MPC}_{v,t} := 2m^{-0.2}\cdot  15^t +m\cdot \sum_{e\ni v; e \in E[V_i]} x_{e,t}^{MPC}  \geq \mathcal{T}_{v,t}\cdot w'(v)$:  freeze $v$ and its incident edges
             \label{algoline:inner freeze}
             \item For each active edge $e\in E[V_i]$: $x_{e,t+1}^{MPC} := x_{e,t}^{MPC}/(1-\epsilon)$
             \item For each frozen edge $e\in E[V_i]$: $x_{e,t+1}^{MPC} := x_{e,t}^{MPC}$
         \end{enumerate}
         \item For each $e=(u,v)\in E[V^{high}]$: $x_e^{MPC} \gets x_{e,0}^{MPC}/(1-\epsilon)^{t'}$, where $0\le t'<I$ is the earliest iteration in which either one of $u,v$ was frozen; or $t'=I$ if both remain active \label{algoline:update edges}
       
         \item For each active $v \in V^{high}$ satisfying $y_v^{MPC}:= \sum_{e\ni v; e\in E[V^{high}]}x_e^{MPC}\ge  w'(v)$: freeze $v$ and its incident edges \label{algoline:freeze v}
         \item For each $e\in E[V^{inactive};V^{high}]$: $x_e^{MPC} \gets 0$ \label{algoline:update inactive zero}
         \item Update residual degree for all nonfrozen $v$: $d(v) \gets $ number of nonfrozen neighbors of $v$ \label{algoline: update active d}
     \end{enumerate}
     \item Directly run the centralized algorithm in one machine on the subgraph induced by nonfrozen vertices, with residual weights $w'(v) \gets w(v) - \sum_{e \ni v, \text{ frozen}} x_e$ \label{algoline:central}
     \item Return all frozen vertices as a vertex cover
 \end{enumerate}
 \caption{MPC-Simulation for MWVC}
 \label{algo:mpc mwvc}
 \end{algorithm}

\subsection{MPC Simulation}
Our MPC algorithm is given in Algorithm \ref{algo:mpc mwvc}. As described in the previous section our MPC algorithm consists of several phases: each execution of the while-loop at Line~(\ref{algoline:phase}) is a phase, and the final execution of the centralized algorithm at Line~(\ref{algoline:central}) is the last phase of the algorithm.
Each phase consists of several iterations, which are similar as in the centralized algorithm.

Vertices may become frozen in a phase. Once frozen, they will remain frozen throughout the rest of the algorithm.
An edge is called frozen if and only if at least one of its endpoints is frozen.
If edge $e$ becomes frozen in a phase, then by the end of this phase---in particular, at Line~(\ref{algoline:update edges}) and Line~(\ref{algoline:update inactive zero})---it will be assigned a nonnegative weight $x_{e}^{MPC}$, which will never be changed in the following phases.
Since the weights of frozen edges are already finalized, we will use \emph{residual weight} $w'(v) = w(v) - \sum_{e \ni v\text{ frozen}} x_e^{MPC}$ as the weight of vertex $v$ when we start the new phase; this makes the analysis cleaner.
After the algorithm terminates, every edge $e\in E$ will be frozen and have a finalized edge weight $x_{e}^{MPC}$.


Per phase, we partition the high-degree vertices $V^{high}$ uniformly at random between $m=\sqrt{d}$ machines, and the respective induced subgraphs are gathered on each machine; the sampling probability is chosen such that the size of the induced subgraph for one machine does not exceed its memory constraint $\tilde{\bigo}(n)$.
Subsequently, the centralized algorithm is simulated in each of the $m$ induced subgraphs locally.
When comparing with the thresholds $\mathcal{T}_{v,t}$, we use $\tilde y_{v,t}^{MPC}$ as a local estimator of the total incident weight of $v$.

After all machines have finished their local simulation, we assign an edge weight $x_e^{MPC}$ to every edge $e=(u,v) \in E[V^{high}]$---especially those cross-partition edges which did not participate in the local simulation---based on the earliest iteration where either of $u,v$ became frozen during their respective local simulation. 
For frozen edges $e\in E[V^{high}]$, $x_{e}^{MPC}$ are their finalized edge weights. Edges in $E[V^{inactive};V^{high}]$ may have got frozen, too; their weights are finalized as 0. 
We also freeze the vertices whose sum of incident $x_{e}^{MPC}$ is too big, so as to prevent them from having negative residual weight in the next phase.
%
%
Once the average degree is below $\log^{30}n$, there are at most $n\log^{30}n \in {\widetilde{\bigo}}(n)$ edges left.\footnote{We did not attempt to optimize this 30 constant in the exponent.} Then, we move all edges into one machine, which executes the last iterations of the centralized algorithm.





\section{Analysis}
\label{chap:analysis}
In this chapter we provide an analysis of Algorithm \ref{algo:mpc mwvc}, the MPC simulation. The analysis is split into three major parts.
In Section~\ref{sec:memory} we address the memory constraints for the machines. 
In Section~\ref{sec:round comp} we derive the round complexity by analyzing the degree reduction in each phase.
Finally, we turn the attention to the approximation ratio in Section~\ref{section:approx ratio}.

\subsection{Memory Constraint}
\label{sec:memory}
Recall that $V^{high}$ is divided into subsets $V_1,\dots,V_m$, where each vertex $v\in V^{high}$ is independently randomly assigned to one of the subsets.
To simulate one phase, the $i$-th machine ($1\le i \le m$) needs to store the subgraph induced by $V_i$, together with edge weights $x_{e,0}^{MPC}$ and vertex weights $w'(v)$.
We do not need to store the random thresholds  $\mathcal{T}_{v,t}$, as they can be sampled on the fly. 
In the following lemma we show that the induced graph of $V_i$ with high probability contains at most $\bigo(n)$ edges, so the necessary information can fit into one machine. 
\begin{lemma}
At Line~(\ref{algoline:def vi}), with high probability $|E[V_i]| \in \bigo(n)$ hold for all $1\le i\le m$.
\end{lemma}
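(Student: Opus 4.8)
The plan is to first pin down $\E[|E[V_i]|]$ and then prove concentration through a \emph{two-stage} argument: $|E[V_i]|$ is a quadratic function of the (independent) machine assignments of the vertices, so it does not admit a direct Chernoff bound, but after a per-vertex truncation it becomes a genuine sum of independent bounded variables.

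\medskip\noindent\textbf{Step 1 (expectation and a few counting facts).} Every edge of $E[V^{high}]$ has both endpoints in $V^{high}$, hence lies in $E[V_i]$ exactly when both endpoints are assigned to machine $i$, which happens with probability $1/m^2 = 1/d$. Since the vertices of $V^{high}$ are nonfrozen, each such edge contributes $2$ to $\sum_{v\in V^{high}}d(v)$, so $2|E[V^{high}]| \le \sum_{v\in V^{high}} d(v)\le \sum_{v\text{ nonfrozen}} d(v)=nd$, giving $|E[V^{high}]|\le nd/2$ and therefore $\E[|E[V_i]|] = |E[V^{high}]|/d \le n/2$. I will also use that $|V^{high}|\le n d^{0.05}$, which follows since every $v\in V^{high}$ has $d(v)\ge d^{0.95}$ while $\sum_{v\in V^{high}}d(v)\le nd$; and that $d<n$ (a residual degree is at most $n-1$), together with the phase invariant $d>\log^{30} n$.

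\medskip\noindent\textbf{Step 2 (per-vertex cap).} Fix $i$ and $v\in V^{high}$, and let $N_v^{(i)}:=|\{u\in V^{high}: (u,v)\in E,\ u\in V_i\}|$. This is a sum of at most $d(v)$ independent $\mathrm{Bernoulli}(1/m)$ variables (the assignments of $v$'s neighbours in $V^{high}$), and it is independent of the event $\{v\in V_i\}$; its mean is at most $d(v)/m$. By the Chernoff bound, for a suitable absolute constant $c_0$ (e.g.\ $c_0=30$) we have $N_v^{(i)} \le a_v := \max\{2d(v)/m,\ c_0\log n\}$ with probability at least $1-n^{-5}$: in the regime $d(v)/m\ge (c_0/2)\log n$ this is the multiplicative bound with $\delta=1$, and otherwise it is the $\delta>1$ bound at threshold $c_0\log n$. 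A union bound over all $\le nm\le n^{2}$ pairs $(v,i)$ shows that the event $\mathcal E:=\{N_v^{(i)}\le a_v\ \forall v\in V^{high},\ \forall i\}$ holds with probability at least $1-n^{-3}$.

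\medskip\noindent\textbf{Step 3 (summing up).} On $\mathcal E$ we have, for every $i$,
\[
|E[V_i]| \;=\; \tfrac12\sum_{v\in V^{high}} \mathbf 1[v\in V_i]\,N_v^{(i)} \;\le\; \tfrac12\sum_{v\in V^{high}} \mathbf 1[v\in V_i]\,a_v \;=:\; \tfrac12 Z_i ,
\]
and now $Z_i$ \emph{is} a sum of independent variables $\mathbf 1[v\in V_i]\,a_v\in[0,a_{\max}]$ with $a_{\max}:=\max_v a_v\le \max\{2n/\sqrt d,\ c_0\log n\}$. Its mean satisfies
\[
\E[Z_i] \;=\; \frac1m\sum_{v\in V^{high}} a_v \;\le\; \frac{2}{m^2}\sum_{v\in V^{high}} d(v) \;+\; \frac{c_0\log n}{m}\,|V^{high}| \;\le\; 2n + \frac{c_0\, n\log n}{d^{0.45}} \;\le\; 3n ,
\]
using $\sum_{v\in V^{high}} d(v)\le nd$, $m=\sqrt d$, $|V^{high}|\le n d^{0.05}$, and $d^{0.45}>\log^{13}n>c_0\log n$ from $d>\log^{30}n$. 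Since $d<n$ and $d>\log^{30}n$, we get $n/a_{\max}\ge \min\{\sqrt d/2,\ n/(c_0\log n)\} \ge c\log n$ for every fixed constant $c$ and all large $n$. Applying the Chernoff bound to $Z_i/a_{\max}\in[0,1]$ — we are in the $\delta\ge 1$ regime since $6n/a_{\max}\ge 2\,\E[Z_i]/a_{\max}$, and $\delta\mu \ge 6n/a_{\max}-\E[Z_i]/a_{\max}\ge 3n/a_{\max}$ — yields $\Pr[Z_i\ge 6n]\le 2\exp(-n/a_{\max})\le n^{-c}$. Taking a union bound over the $m\le n$ machines and intersecting with $\mathcal E$, we conclude that with high probability $|E[V_i]|\le \tfrac12\cdot 6n = O(n)$ simultaneously for all $1\le i\le m$.

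\medskip\noindent The main obstacle is exactly the non-linearity in Step 3's starting point: a single vertex of very large residual degree could make any variance- or bounded-differences estimate for $|E[V_i]|$ far too weak, so one cannot concentrate it directly. The two-stage truncation removes this dependence, and the reason the final Chernoff bound is strong enough is that the cap $a_{\max}\le 2n/\sqrt d$ is a $\mathrm{poly}\log n$ factor below the mean $\Theta(n)$ — which is where the phase invariant $d>\log^{30}n$ is used.
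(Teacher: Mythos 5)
Your proof is correct and follows essentially the same route as the paper: a Chernoff bound on each vertex's number of same-machine neighbors (your cap $a_v$, the paper's $d_i(v)\le 2d(v)/m$), followed by a second Chernoff bound over the random partition controlling the total (capped) degree mass landing on machine $i$, and a union bound. The only difference is cosmetic: your $c_0\log n$ floor in $a_v$ is not needed, since $d(v)\ge d^{0.95}$ and $m=\sqrt d$ already make every expectation $d(v)/m\ge d^{0.45}\gg \log n$, which is exactly how the paper avoids it.
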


\begin{proof}

Recall that $m=\sqrt{d}$, and $V_1,\dots,V_m$ is a random partition of $V^{high}$. Fixing any $i \in \{1,\dots,m\}$, define independent random variables $s_v\in [0,1]$ for all $v\in V^{high}$ as: if $v\in V_i$ then $s_v:=d(v)/n$; otherwise $s_v:=0$.
Then $\E[\sum_{v\in V^{high}}s_v] = \sum_{v\in V^{high}} \frac{d(v)/n}{m} \le \frac{d}{m} = \sqrt{d}$. By Chernoff bound, we have 
$$\Pr[\sum_{v\in V^{high}}s_v > 2\sqrt{d}] \le \exp(-\sqrt{d}/3).$$
Recall that $d>\log^{30} n$. Hence, with high probability we have
$$\sum_{v\in V_i}d(v) = n\cdot \sum_{v\in V^{high}}s_v \le  2n\sqrt{d}.$$

For any $v\in V^{high}$, let $d_i(v)$ denote the number of its neighbors in $V_i$. Similarly by Chernoff bound, we have 
$$\Pr[d_i(v) >  \frac{2d(v)}{m}] \le \exp(-d(v)/3m) \le \exp(-d^{0.45}/3),$$
where the last inequality follows from the definition of $V^{high}$. Hence, with high probability we have
$$\lvert E(G'[V_i])\rvert= \frac{1}{2}\sum_{v \in V_i}d_i(v) \le \frac{1}{2}\sum_{v \in V_i}\frac{2d(v)}{m}  \le \frac{2n\sqrt{d}}{m} = 2n.$$
We finish the proof by a union bound over all $i\in \{1,\dots,m\}$.
\end{proof}

We have shown that the near-linear local memory constraint is satisfied.
Since the number of machines used for simulation is $m = \sqrt{d}\le \sqrt{|E|/n}$, the total memory used is with high probability $\tilde \bigo(\sqrt{d} n) \le \tilde \bigo(|E|)$, so the global memory constraint is also satisfied.


\subsection{Round Complexity}
\label{sec:round comp}
Our algorithm runs in multiple phases, each of which can be implemented in $O(1)$ MPC rounds.
The number of nonfrozen edges is reduced in each phase, and in the end we switch to the centralized algorithm when the nonfrozen edges fit in one machine.
We will bound the round complexity by showing that the number of nonfrozen edges significantly decreases in each phase.

As discussed in Section~\ref{sec3.2}, we use an orientation argument.
At the beginning of the phase, we orient the edges $e\in E[V^{high}]$ in the following way: direct the edge $(u,v)$ from $u$ to $v$ if $\frac{w'(u)}{d(u)} < \frac{w'(v)}{d(v)}$ and reverse otherwise, breaking ties arbitrarily.
After this orientation, the incident edges around vertex $v$ split in two parts: $N_{in}(v)$ contains all edges directed towards $v$ and $N_{out}(v)$ contains all edges directed outward from $v$. 
For each edge $e \in N_{out}(v)$ we have $x_{e,0}^{MPC} = \frac{w'(v)}{d(v)}$, and for each edge $e \in N_{in}(v)$ we have $x_{e,0}^{MPC} \le \frac{w'(v)}{d(v)}$. 


\begin{remark}
Note that $d(v)$ is defined as the number of nonfrozen neighbors of $v$ (see Line~(\ref{algoline: update active d})), i.e., the degree of $v$ in the subgraph induced by $V^{high}\cup V^{inactive}$. It is \emph{not} defined as the number of $v$'s neighbors in $V^{high}$.
\label{remark:deg}
\end{remark}

A first observation to make is that the active out-degree decreases significantly over a phase.
\begin{observation}[Active out-degree]
\label{obs:act out degree}
After Line~(\ref{algoline:freeze v}) finishes, for any active vertex $v\in V^{high}$, denote with $d_A^{out}(v)$ the number of edges $(v,u)$ directed outward from $v$ such that $u\in V^{high}$ is still active. 
 Then 
$$
d^{out}_A(v) \leq d(v)(1-\epsilon)^I.
$$
\end{observation}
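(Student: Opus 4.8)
The plan is to track a single active vertex $v\in V^{high}$ that has survived all $I$ local iterations on its machine (i.e.\ it was not frozen at Line~(\ref{algoline:inner freeze}) for any $t<I$, and it was not frozen at Line~(\ref{algoline:freeze v})), and deduce the claimed bound on $d_A^{out}(v)$ from the fact that its dual constraint in the \emph{full} induced subgraph $E[V^{high}]$ was not exceeded. The key point is that after Line~(\ref{algoline:update edges}) every edge $e=(v,u)$ with $u\in V^{high}$ still active at the end of the phase must have $t'=I$, so $x_e^{MPC} = x_{e,0}^{MPC}/(1-\eps)^I$. For such an outgoing edge $e\in N_{out}(v)$ we moreover have $x_{e,0}^{MPC} = w'(v)/d(v)$ by the orientation convention recalled just before the statement (among the two endpoints, $v$ has the larger ratio, so the $\min$ is attained at $v$). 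Hence each still-active outgoing edge contributes exactly $\frac{w'(v)}{d(v)(1-\eps)^I}$ to $y_v^{MPC} = \sum_{e\ni v;\,e\in E[V^{high}]} x_e^{MPC}$.

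First I would note that since $v$ was not frozen at Line~(\ref{algoline:freeze v}), it satisfies $y_v^{MPC} < w'(v)$. Dropping all terms of $y_v^{MPC}$ coming from edges other than the still-active outgoing ones (all edge weights are nonnegative), we get
\begin{equation*}
d_A^{out}(v)\cdot \frac{w'(v)}{d(v)(1-\eps)^I} \;\le\; \sum_{e\ni v;\,e\in E[V^{high}]} x_e^{MPC} \;=\; y_v^{MPC} \;<\; w'(v).
\end{equation*}
Since $w'(v)>0$ (the residual weight of a nonfrozen vertex is positive), we may divide through by $w'(v)/\big(d(v)(1-\eps)^I\big)$ and obtain $d_A^{out}(v) < d(v)(1-\eps)^I$, which is the desired inequality.

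The only genuinely delicate point is justifying that every edge $(v,u)$ counted in $d_A^{out}(v)$ indeed has finalized weight $x_e^{MPC}=x_{e,0}^{MPC}/(1-\eps)^I$ and initial weight exactly $w'(v)/d(v)$. The former holds because Line~(\ref{algoline:update edges}) sets $t'$ to be the earliest iteration in which \emph{either} endpoint was frozen during local simulation, and by assumption both $v$ and $u$ remained active through all $I$ iterations, forcing $t'=I$; here one should be slightly careful that "still active at the end of the phase" (the hypothesis of the observation, which is about the status after Line~(\ref{algoline:freeze v})) implies "active throughout the $I$ local iterations", which is immediate since freezing is monotone. The latter holds by the orientation rule: an edge directed $v\to u$ means $\frac{w'(v)}{d(v)} \ge \frac{w'(u)}{d(u)}$ — wait, I should double-check the direction — the text orients $(u,v)$ from $u$ to $v$ when $\frac{w'(u)}{d(u)} < \frac{w'(v)}{d(v)}$, so an edge pointing \emph{out of} $v$ points toward the endpoint with the larger ratio, and thus $x_{e,0}^{MPC} = \min\{w'(v)/d(v), w'(u)/d(u)\} = w'(v)/d(v)$. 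I would state this once as the sentence already in the excerpt ("for each edge $e\in N_{out}(v)$ we have $x_{e,0}^{MPC}=\frac{w'(v)}{d(v)}$") and invoke it. Everything else is a one-line arithmetic manipulation, so I expect the whole proof to be three or four lines.
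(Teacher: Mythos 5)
Your proposal is correct and follows essentially the same argument as the paper: the orientation convention forces $x_{e,0}^{MPC}=w'(v)/d(v)$ on out-edges, still-active out-edges end the phase at weight $w'(v)/\bigl(d(v)(1-\eps)^I\bigr)$, and the non-freezing condition $y_v^{MPC}<w'(v)$ at Line~(\ref{algoline:freeze v}) yields the bound. The paper merely phrases this as a proof by contradiction, which is an immaterial difference.
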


\begin{proof}
Recall that for every edge $e\in E[V^{high}]$ directed outward from $v$, we set $x_{e,0}^{MPC}= \frac{w'(v)}{d(v)}$ at the beginning of this phase.
Assume towards a contradiction that after Line~(\ref{algoline:freeze v}) finishes there exists an active $v\in V^{high}$ with $d^{out}_A(v) > d(v)(1-\epsilon)^I$.
The weight of active out-edges is at this point $x_{e}^{MPC} =x_{e,I}^{MPC} = \frac{w'(v)}{d(v)(1-\epsilon)^I}$. Therefore, \begin{align*}
y_v^{MPC} &  = \sum_{e\ni v;e\in E[V^{high}]}x_{e}^{MPC}\\
& \geq  d_A^{out}(v)\cdot \frac{w'(v)}{d(v)(1-\epsilon)^I} \\
 & >  w'(v),
\end{align*}
meaning that $v$ would have been frozen at Line~(\ref{algoline:freeze v}), hence there is no such vertex.
\end{proof}
Using Observation \ref{obs:act out degree} we show that the number of nonfrozen edges is decreasing over the course of a phase.
\begin{lemma}
With high probability, after Line~(\ref{algoline: update active d}) finishes, the number of remaining nonfrozen edges is 
$$\frac{1}{2}\sum_{v\in V\text{ nonfrozen}} d(v) \le 2nd(1-\eps)^{I}.$$
\label{lemma:decrease avg degree}
\end{lemma}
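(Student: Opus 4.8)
The plan is to bound the number of remaining nonfrozen edges after the phase by splitting them according to which vertex types their endpoints fall into, and then accounting for each type via the orientation argument together with concentration over the random partition. First I would observe that a nonfrozen edge after Line~(\ref{algoline: update active d}) must have both endpoints nonfrozen; since every such vertex lies in $V^{high}\cup V^{inactive}$ at the start of the phase, I can charge each nonfrozen edge to one of its endpoints and write $\frac12\sum_{v\text{ nonfrozen}}d(v)$, where $d(v)$ here is the updated residual degree. The key point is that for a nonfrozen vertex $v$, I want to bound its current (residual) degree by something like $d(v)_{\text{old}}(1-\eps)^I$ plus lower-order terms, using the fact that the orientation gives $x_{e,0}^{MPC}=w'(v)/d(v)_{\text{old}}$ on every out-edge of $v$. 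Vertices in $V^{inactive}$ had degree $<d^{0.95}$ to begin with, so the total contribution of edges incident to $V^{inactive}$ is at most $n\cdot d^{0.95}\ll nd(1-\eps)^I$ given $I=\Theta(\log m)=\Theta(\log d)$ and hence $(1-\eps)^I$ polynomial in $d$; this part is routine.

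The heart is the $V^{high}$-to-$V^{high}$ edges. Here I would use Observation~\ref{obs:act out degree}: after Line~(\ref{algoline:freeze v}), any still-active $v\in V^{high}$ has active out-degree $d_A^{out}(v)\le d(v)(1-\eps)^I$. Summing over all active $v\in V^{high}$ and noting every nonfrozen edge inside $V^{high}$ is an out-edge of exactly one of its (active) endpoints, the number of nonfrozen edges in $E[V^{high}]$ is at most $\sum_{v\in V^{high}\text{ active}} d(v)(1-\eps)^I \le nd(1-\eps)^I$, since $\sum_{v\in V^{high}}d(v)\le \sum_{v}d(v)= nd$ by definition of $d$. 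Combining with the $V^{inactive}$ contribution, and absorbing the lower-order $n d^{0.95}$ term into the factor $2$, yields the stated bound $2nd(1-\eps)^I$. The "with high probability" qualifier enters only because Observation~\ref{obs:act out degree} as stated is deterministic about the MPC edge weights $x_e^{MPC}$; what actually needs w.h.p. reasoning is that the \emph{residual} degree $d(v)$ used on the left-hand side — recomputed at Line~(\ref{algoline: update active d}) — is not much larger than the degree $d(v)$ used at the start of the phase, and that the local-simulation behavior (which edges got frozen, recorded in $x_e^{MPC}$ via Line~(\ref{algoline:update edges})) faithfully reflects the global out-degree shrinkage.

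So the main obstacle is the bookkeeping around \emph{which} $d(v)$ appears where: the $d(v)(1-\eps)^I$ in Observation~\ref{obs:act out degree} refers to the start-of-phase degree, while the sum $\frac12\sum d(v)$ we want to bound uses the end-of-phase residual degree. I expect these to agree up to the fact that the residual degree only decreases (freezing neighbors can only remove edges), so the end-of-phase degree is at most the start-of-phase degree, which makes the substitution go in the favorable direction — but I would want to state this monotonicity cleanly and make sure the orientation and the $x_{e,0}^{MPC}=w'(v)/d(v)$ identity are using the same $d(v)$. I would also need the standard Chernoff-type argument (as in the memory lemma) to handle any place where a per-machine local quantity is used as a proxy for a global one, but the orientation argument is designed precisely so that the out-degree bound follows from the finalized weights $x_e^{MPC}$ and the freezing rule at Line~(\ref{algoline:freeze v}), keeping the probabilistic part minimal.
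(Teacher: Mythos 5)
Your proposal is correct and follows essentially the same route as the paper: decompose the remaining nonfrozen edges into those incident to $V^{inactive}$ (at most $n\cdot d^{0.95}$) and those in $E[V^{high}]$ (bounded by $\sum_{v} d_A^{out}(v)\le nd(1-\eps)^{I}$ via Observation~\ref{obs:act out degree}), then absorb the first term using $(1-\eps)^{I}\ge d^{-1/20}$. The bookkeeping concerns you raise about start-of-phase versus end-of-phase degrees resolve exactly as you expect (residual degrees only decrease), and the paper's own proof is likewise essentially deterministic given Observation~\ref{obs:act out degree}.
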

\begin{proof}
A remaining nonfrozen edge is either incident to a low-degree vertex $v\in V^{inactive}$, or in $E[V^{high}]$ and has been active during this phase.
Recall that $$nd = \sum_{V^{high}\cup V^{inactive}}d(v)$$ (see Line~(\ref{algoline:phase}) and Line~(\ref{algoline:alive vert})). Combining Observation~\ref{obs:act out degree} and the definition of $V^{inactive}$, we have that the number of remaining active edges is at most
\begin{align*}
\sum_{v\in V^{high}} d^{out}_A(v) + n\cdot d^{0.95} &\leq \sum_{v\in V^{high}} d(v)(1-\epsilon)^I + nd^{0.95} \\ &\leq nd(1-\epsilon)^I + nd^{0.95} \\
&\leq 2nd(1-\epsilon)^I.
\end{align*}
The last inequality holds as
$(1-\epsilon)^I = (1-\epsilon)^{\log d/20\log 15} \ge d^{-1/20}$.
\end{proof}

\begin{theorem}
The MPC simulation with high probability takes at most ${\bigo}(\log\log d)$ rounds, where $d= 2|E|/n$ is the initial average degree of the input graph.
\end{theorem}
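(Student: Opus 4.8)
The plan is to show that each phase of the while-loop shrinks a degree-like potential so rapidly that only $O(\log\log d)$ phases are needed. The key quantity is $d = \frac1n\sum_{v \text{ nonfrozen}} d(v) = 2|E_{\text{nonfrozen}}|/n$, and Lemma~\ref{lemma:decrease avg degree} tells us that after one phase the new number of nonfrozen edges is at most $2nd(1-\eps)^I$, i.e.\ the new value of $d$ (call it $d'$) satisfies $d' \le 4d(1-\eps)^I$. Now plug in the choice $I = \frac{\log m}{10\log 15} = \frac{\log\sqrt d}{10\log 15} = \frac{\log d}{20\log 15}$ from Line~(\ref{algoline:nb machines}), so that $(1-\eps)^I \le d^{-1/20}$ (as already observed in the proof of Lemma~\ref{lemma:decrease avg degree}, using $1-\eps \le 15^{-1}$, which holds for small enough constant $\eps$; one should double-check this constant but it is routine). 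Hence $d' \le 4 d^{19/20}$.

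First I would iterate this recursion. Writing $d_0 = d$ for the initial average degree and $d_k$ for the value at the start of phase $k+1$, we have $d_{k+1} \le 4 d_k^{19/20}$. As long as $d_k$ is at least some absolute constant (say $d_k \ge 8^{20}$, so that $4 \le d_k^{1/40}$), this gives $d_{k+1} \le d_k^{19/20 + 1/40} = d_k^{39/40}$. Therefore $\log d_{k+1} \le \frac{39}{40}\log d_k$, so after $k$ phases $\log d_k \le (39/40)^k \log d_0$. The while-loop terminates once $d_k \le \log^{30} n$; but actually we should stop the geometric-decay argument once $d_k$ drops below the constant threshold $8^{20}$, which happens after $k = O(\log\log d_0) = O(\log\log d)$ phases since we need $(39/40)^k \log d_0 \le \log(8^{20}) = O(1)$. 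Once $d_k$ is a constant, certainly $d_k \le \log^{30} n$ (for $n$ large enough), so the while-loop has already exited; in the worst case one extra phase suffices. I would also note the edge case where $d_0$ itself is already $\le \log^{30} n$ (or constant), in which case the while-loop body is never executed and we go straight to Line~(\ref{algoline:central}).

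Next I would account for the last phase, Line~(\ref{algoline:central}): when the loop exits, the nonfrozen subgraph has at most $n\log^{30}n = \tilde O(n)$ edges and fits in one machine, where we run the centralized algorithm (Algorithm~\ref{algo:local det mwvc}). By Proposition~\ref{prop:central initialization} that terminates in $O(\log\Delta) = O(\log n)$ iterations — but this all happens inside a \emph{single} machine in \emph{one} MPC round, since local computation is unrestricted (polynomial in local memory, which is $\tilde O(n)$, and $O(\log n)$ iterations over $\tilde O(n)$ edges is polynomial). So the last phase costs $O(1)$ MPC rounds. Each while-loop phase also costs $O(1)$ MPC rounds: the partition in Line~(\ref{algoline:def vi}), gathering induced subgraphs (which fit in one machine w.h.p.\ by the memory lemma), running $I$ local iterations without communication, and then a constant number of aggregation steps for Lines~(\ref{algoline:update edges})–(\ref{algoline: update active d}). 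Multiplying $O(\log\log d)$ phases by $O(1)$ rounds each gives the claimed bound. Finally I would take a union bound over all $O(\log\log d)$ phases so that the high-probability guarantees of the memory lemma and Lemma~\ref{lemma:decrease avg degree} hold simultaneously; since each holds w.h.p.\ (failure probability $n^{-c}$) and there are only $\poly\log$ many phases, the union bound is harmless.

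The main obstacle — really the only nontrivial point — is making sure the recursion $d_{k+1} \le 4d_k^{19/20}$ actually drives $d$ down to a constant in $O(\log\log d)$ steps rather than stalling; this requires the exponent $19/20$ to be bounded \emph{away} from $1$ by a constant and the additive/multiplicative slack ($4$, and the $nd^{0.95}$ term from $V^{inactive}$) to be absorbable, which is exactly why the threshold $d^{0.95}$ for being ``high degree'' and the choice $I = \Theta(\log d)$ were made. The bookkeeping to confirm $(1-\eps)^I \le d^{-1/20}$ under the paper's constants, and to handle the transition regime where $d_k$ is between the constant threshold and $\log^{30}n$, is routine but must be stated. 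Everything else (each phase is $O(1)$ rounds, the final centralized phase is $O(1)$ rounds, the union bound) is bookkeeping.
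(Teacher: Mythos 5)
Your overall route is the same as the paper's: combine Lemma~\ref{lemma:decrease avg degree} with the choice of $I$ to get a per-phase recursion for the average degree, iterate it to a doubly-logarithmic phase count, note that each while-loop phase and the final centralized step each cost $\bigo(1)$ MPC rounds, and union bound over the phases. The one step that fails as written is the claim $(1-\eps)^I \le d^{-1/20}$, justified by ``$1-\eps\le 15^{-1}$, which holds for small enough constant $\eps$''. That is backwards: $1-\eps\le 1/15$ forces $\eps\ge 14/15$, whereas the algorithm runs with a small constant $\eps$ (the paper assumes $0<\eps<1/2$, and the approximation guarantee needs $\eps$ small). For such $\eps$ the \emph{reverse} inequality $(1-\eps)^I \ge d^{-1/20}$ holds --- and this reverse direction is exactly what the proof of Lemma~\ref{lemma:decrease avg degree} uses to absorb the $nd^{0.95}$ term --- so your recursion $d_{k+1}\le 4d_k^{19/20}$ is not implied by the lemma. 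The true bound is only $d_{k+1}\le 4 d_k(1-\eps)^{I}=4d_k^{1-2\gamma}$ with $2\gamma=\frac{\log(1/(1-\eps))}{20\log 15}$, an $\eps$-dependent constant that is much smaller than $1/20$ when $\eps$ is small.

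The damage is limited, because your argument never actually needs the exponent $19/20$, only an exponent bounded away from $1$ by a positive constant. Replacing $19/20$ by $1-2\gamma$, and the constant threshold $8^{20}$ by one depending on $\eps$ (say $d_k\ge 4^{1/\gamma}$, or simply using $d_k>\log^{30}n$ together with ``$n$ sufficiently large'' to absorb the factor $4$, which yields $d_{k+1}\le d_k^{1-\gamma}$), the geometric decay of $\log d_k$ and hence the $\bigo(\log\log d)$ phase count go through exactly as you describe --- and this is precisely the computation in the paper's proof. So: same approach and correct skeleton, but the stated per-phase decay rate and its justification must be corrected as above before the iteration step is sound.
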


\begin{proof}
As each phase takes $O(1)$ MPC rounds, we now need to show that the condition at Line~(\ref{algoline:phase}) breaks after $\bigo(\log \log d)$ phases. 
Using Lemma~\ref{lemma:decrease avg degree} and the definition of $I$ (see Line~(\ref{algoline:nb machines})), it remains to show that $d_k\le \log^{30} n$ holds for some $k\in \bigo(\log \log d)$, where $d_0=d$ and $d_{i+1} \le 4d_i(1-\eps)^{\frac{\log d_i}{20\log 15}}$.

Assume $0<\eps <1/2$ and define the constant $\gamma = \frac{\log(1/(1-\eps))}{40\log 15}\in (0,1)$. When $d_i>\log^{30} n$, we have 
$$d_{i+1} \le 4d_i^{1-2\gamma} \le d_i^{1-\gamma}$$
for sufficiently large $n$. 
Hence we have $d_k\le d_{k-1}^{1-\gamma}\le \dots \le d^{(1-\gamma)^k} \le \log^{30} n$ for some $k \le \frac{\log \big (\log d\big /30\log\log n\big )}{\log\big (1\big /(1-\gamma) \big )}\in \bigo (\log \log d)$.
\end{proof}

\subsection{Approximation Ratio}
\label{section:approx ratio}
In this section, we will prove the approximation guarantee of our MPC algorithm.
To achieve this, we consider one phase of the MPC algorithm, and imagine running the centralized algorithm on the induced subgraph of $V^{high}$, with the same vertex weights $w'(v)$, initial edge weights $x_{e,0}^{MPC}$ and random thresholds $\mathcal{T}_{v,t}$ that are used in this phase of MPC simulation, and then compare the behaviour of these two algorithms.

Recall that the edge weights and total incident weights in the centralized algorithm are denoted by $x_{e,t}$ and $y_{v,t}$.
We have $x_{e,0}:=x_{e,0}^{MPC}$.
By Proposition~\ref{prop:central initialization} we know this initialization is valid.

In the description of the MPC algorithm, we have defined edge weights $x_{e,t}^{MPC}$ $(0\le t\le I)$ for all local edges $e\in E[V_1]\cup \dots \cup E[V_m]$.
We can easily extend this definition to all edges in $E[V^{high}]$ (that is, including cross-partition edges), in the sense as given by Line~(\ref{algoline:update edges}). 
Similarly we can define $y_{v,t}^{MPC}$ for all $v\in V^{high}, 0\le t \le I$ (as at Line~(\ref{algoline:freeze v})).

We will prove the following key lemma, stating that the behaviour of our MPC simulation is similar to that of the centralized algorithm.
\begin{lemma}
\label{lemma:compare}
Consider one phase of the MPC algorithm.
Run the centralized algorithm for $I:=\frac{\log m}{10\log 15}$ iterations on the graph induced by $V^{high}$, with the same vertex weights $w'$, initial edge weights $x_{e,0}^{MPC}$, and random thresholds $\mathcal{T}_{v,t}$.
With high probability, we have
$$|y_{v,t}-\tilde y_{v,t}^{MPC}| \le 6\eps \cdot w'(v),$$
and
$$|y_{v,t}-y_{v,t}^{MPC}| \le 6\eps \cdot w'(v),$$
for all $0\le t\le  I$ and all $v\in V^{high}$.
\end{lemma}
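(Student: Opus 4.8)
The plan is to prove Lemma~\ref{lemma:compare} by induction on the iteration index $t$, establishing both stated bounds simultaneously and, as an auxiliary invariant, that the MPC freeze-status of each $v\in V^{high}$ agrees with its status in the imagined centralized run except possibly during a window of $O(1)$ iterations around when $v$ freezes. The base case $t=0$ is the cleanest: no vertex is frozen, so $y_{v,0}=y_{v,0}^{MPC}=\sum_{e\ni v,\ e\in E[V^{high}]}x_{e,0}^{MPC}$ exactly and the second bound is trivial, while $\tilde y_{v,0}^{MPC}=2m^{-0.2}+m\sum_{u\in V_i:\,(u,v)\in E[V^{high}]}x_{(u,v),0}^{MPC}$ is, conditioned on $v\in V_i$, a sum of independent terms (one per neighbor $u$, recording whether $u\in V_i$) each lying in $[0,w'(v)/d(v)]$ with total expectation $y_{v,0}$. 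A Chernoff bound then gives $|\tilde y_{v,0}^{MPC}-y_{v,0}|\le \eps w'(v)$ w.h.p., splitting into two regimes: when the conditional mean $\frac{d(v)}{m\,w'(v)}y_{v,0}$ is polylogarithmically large, sharp relative concentration applies; when it is small, $y_{v,0}$ is itself negligible relative to $\eps w'(v)$ (using $d(v)\ge d^{0.95}$, $m=\sqrt d$, $d>\log^{30}n$), and the additive bias term $2m^{-0.2}$ is exactly what supplies the one-sided inequality $\tilde y_{v,0}^{MPC}\ge y_{v,0}$.

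For the inductive step I would assume both bounds through iteration $t$. The weights $x_{e,t+1}^{MPC}$ and $x_{e,t+1}$ are determined entirely by the freeze-status of the endpoints of $e$ at iteration $t$: $v$ freezes in the centralized run iff $y_{v,t}\ge\mathcal T_{v,t}w'(v)$, and in the MPC run iff $\tilde y_{v,t}^{MPC}\ge\mathcal T_{v,t}w'(v)$. Since $|y_{v,t}-\tilde y_{v,t}^{MPC}|\le 6\eps w'(v)$ and $\mathcal T_{v,t}w'(v)\in[(1-4\eps)w'(v),(1-2\eps)w'(v)]$, every vertex with $y_{v,t}<(1-10\eps)w'(v)$ is kept active by both runs, so a discrepancy can arise only at a \emph{borderline} vertex with $y_{v,t}\in[(1-10\eps)w'(v),w'(v))$. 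The crucial observation is that a borderline vertex has incident weight within an $O(\eps)$ factor of $w'(v)$, so --- using that the total incident weight grows by the fixed factor $1/(1-\eps)$ per active iteration and that any vertex still active in the MPC run satisfies $y_{v,t}^{MPC}<w'(v)$ (a consequence of the one-sidedness $\tilde y^{MPC}\ge y^{MPC}$ together with Line~\ref{algoline:freeze v}) --- it, and hence all its incident edges, freezes in \emph{both} runs within $O(1)$ further iterations; in particular the two runs cannot drift apart by more than $O(1)$ iterations. Hence for every edge $e=(u,v)$ oriented out of $v$ the two freeze times differ by $O(1)$, so $x_{e,t+1}^{MPC}/x_{e,t+1}\in[1-O(\eps),1+O(\eps)]$ and $|x_{e,t+1}^{MPC}-x_{e,t+1}|=O(\eps)\cdot\frac{w'(v)}{d(v)}$; summing over the at most $d(v)$ edges incident to $v$ gives $|y_{v,t+1}^{MPC}-y_{v,t+1}|\le O(\eps)\,w'(v)$.

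It then remains to pass from the global quantity $y_{v,t+1}^{MPC}$ to the local estimator $\tilde y_{v,t+1}^{MPC}$. I would condition on everything revealed before iteration $t+1$ except the partition; the centralized reference weights $x_{e,t+1}$ (which depend only on thresholds) are then fixed, and $m\sum_{u\in V_i:\,(u,v)\in E[V^{high}]}x_{(u,v),t+1}$ is again a sum of independent per-neighbor terms bounded by $m\cdot\frac{w'(v)}{d(v)}$ with expectation $y_{v,t+1}$. A Chernoff bound (same two-regime analysis, with the bias term $2m^{-0.2}15^{t+1}$ absorbing the lower tail and supplying the factor-$15^{t+1}$ headroom needed because the iteration-$(t+1)$ edge weights span a range of size $(1-\eps)^{-(t+1)}$) shows this concentrates around $y_{v,t+1}$; replacing the reference weights by the actual local weights $x_{(u,v),t+1}^{MPC}$ changes the sum by at most the $O(\eps)$-relative discrepancies bounded in the previous paragraph. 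A union bound over all $v\in V^{high}$ and all $t\le I=O(\log n)$ (and then the $O(\log\log d)$ phases) makes every exceptional event fail with probability $1/\poly(n)$, and choosing the hidden $O(\eps)$ constants appropriately keeps the accumulated error below $6\eps w'(v)$, closing the induction.

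I expect the main obstacle to be the bookkeeping in the inductive step rather than any single inequality: one must arrange the conditioning so the Chernoff bounds see genuinely independent randomness --- in particular decoupling ``which neighbors of $v$ land in $v$'s machine'' from ``when those neighbors freeze'', the latter depending on the subgraphs of their own machines --- and one must simultaneously control how the small per-iteration errors propagate across all $I\approx\frac{\log m}{10\log 15}$ iterations without amplifying past the $6\eps w'(v)$ budget, despite the geometric growth of the edge weights. The specific choices of $I$, of the $15^t$ factor, and of the $m^{-0.2}$ scale in the bias term are precisely what make this accounting go through.
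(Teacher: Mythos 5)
Your outline gets the base case and the general shape (induction over iterations, Chernoff for the partition, one-sidedness from the bias term) right, but the inductive step rests on a claim that is false, and this is exactly the point where the paper has to work hardest. You assert that a ``borderline'' vertex --- one whose incident weight is within $O(\eps)$ of $w'(v)$, so that the two runs may disagree on freezing it --- must freeze in \emph{both} runs within $O(1)$ further iterations, because its incident weight grows by a factor $1/(1-\eps)$ per iteration. But only the weights of \emph{active} edges grow: if the neighbors of such a vertex freeze, its incident weight stops increasing and it can remain active in one run for the rest of the phase while being frozen in the other from iteration $t^*$ on. For an edge $e=(u,v)$ in this situation, $x_{e}^{MPC}$ is stuck at $x_{e,0}/(1-\eps)^{t^*}$ while $x_{e,t}$ keeps growing up to $x_{e,0}/(1-\eps)^{I}$, so the ratio can reach $(1-\eps)^{-(I-t^*)}=\poly(m)$ rather than $1+O(\eps)$. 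Your per-edge bound $|x_{e,t+1}^{MPC}-x_{e,t+1}|=O(\eps)\,w'(v)/d(v)$ therefore fails, and with it the summation giving $|y^{MPC}_{v,t+1}-y_{v,t+1}|\le O(\eps)w'(v)$. Worse, the bias term systematically pushes $\tilde y^{MPC}$ above $y$, so \emph{every} borderline neighbor of $v$ could disagree in the same direction; nothing in your argument prevents a constant fraction of $v$'s incident weight from sitting on such edges.

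What is missing is the probabilistic use of the random thresholds $\mathcal{T}_{u,t}$: conditioned on an estimation error of $\sigma w'(u)$, a neighbor $u$ disagrees between the two runs (``turns bad'') with probability at most $\sigma/\eps$, \emph{independently} across neighbors, because the threshold must land in a window of width $2\sigma w'(u)$ inside an interval of width $2\eps w'(u)$. The paper tracks the accumulated discrepancy $\diff(v,t)=\sum_{e\ni v}|x_{e,t}-x_{e,t}^{MPC}|$ for \emph{good} vertices, shows the contribution of already-discrepant edges grows by at most a factor $3$ per iteration while newly bad neighbors contribute only $O(m^{-0.2}15^{t})w'(v)$ in expectation (and w.h.p., by another concentration bound), yielding $\diff(v,t)\le m^{-0.2}15^{t}w'(v)\le m^{-0.1}w'(v)$ --- this, not the spread of edge weights, is where the $15^{t}$ comes from. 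Bad vertices are then handled by a separate, non-inductive argument: one-sidedness forces a bad $v$ to have frozen in MPC while staying active centrally, so $\tilde y^{MPC}_{v,t}$ is frozen at a value in $[(1-4\eps)w'(v),\,(1+\eps)w'(v)]$ while $y_{v,t}$ is sandwiched in $[(1-5\eps)w'(v),\,w'(v)]$ by monotonicity and feasibility, giving the $6\eps$ bound without ever claiming the two runs re-synchronize. Without replacing your deterministic ``$O(1)$ drift'' claim by this probabilistic accounting, the induction does not close.
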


Before proving this lemma, we show how it implies the approximation guarantee of our MPC algorithm.
\begin{theorem}
With high probability, Algorithm~\ref{algo:mpc mwvc} returns a vertex cover $C$ which satisfies 
$ w(C) \leq (2+30\epsilon) OPT,$
where $OPT$ is the weight of a minimum weight vertex cover.
\end{theorem}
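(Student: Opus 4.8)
The plan is to run the standard primal-dual accounting on top of Lemma~\ref{lemma:compare}: first exhibit $C$ as a valid vertex cover, then show that the finalized edge weights $\{x_e\}$ form an approximately valid fractional matching, then show these weights are near-tight at every vertex of $C$, and finally combine the two with weak LP-duality. Since Lemma~\ref{lemma:compare} holds with high probability for a single phase and the algorithm has $\bigo(\log\log d)$ phases (as shown above), I would start by union-bounding over phases so as to condition on Lemma~\ref{lemma:compare} holding in all of them. Cover validity is then immediate: every freezing step freezes a vertex together with all of its incident edges, the centralized run at Line~(\ref{algoline:central}) freezes a set of vertices covering the residual graph (Proposition~\ref{prop:central apx}, applied with residual weights), and an edge becomes frozen only when one of its endpoints is, so $C$ is a vertex cover.

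For the matching bound, let $x_e\ge 0$ be the weight that edge $e$ gets in the phase where it first becomes frozen. Fix $v$, let $k$ be the phase in which $v$ freezes, and for $j<k$ let $S_j$ be the total $x_e$ over the incident edges of $v$ that are first frozen in phase $j$. The key observation is that the residual weight used for $v$ at the start of phase $j$ is exactly $w_j'(v)=w(v)-\sum_{j'<j}S_{j'}$, so it suffices to bound each $S_j$. I would check: $S_j=0$ when $v\in V^{inactive}$ in phase $j$; $S_j\le y_v^{MPC}<w_j'(v)$ when $v\in V^{high}$ survives phase $j$ (by the freezing test of Line~(\ref{algoline:freeze v})), which also keeps $w_j'(v)>0$ inductively; and $S_k\le(1+6\eps)w_k'(v)$ in the phase $k$ where $v$ freezes. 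For the last bound the crucial point is that if $v$ freezes at iteration $t\le I$, then every edge incident to $v$ in $E[V^{high}]$ has ``earliest freeze iteration'' at most $t$, hence finalized weight $x_{e,t}^{MPC}$, so $S_k=y_{v,t}^{MPC}$; Lemma~\ref{lemma:compare} together with $y_{v,t}\le w_k'(v)$ (Observation~\ref{obs:central valid} applied to the centralized run in Lemma~\ref{lemma:compare}) then gives $S_k\le y_{v,t}+6\eps w_k'(v)\le(1+6\eps)w_k'(v)$, and the centralized-phase case follows directly from Observation~\ref{obs:central valid}. Telescoping the $S_j$ gives $\sum_{e\ni v}x_e=\sum_{j\le k}S_j\le(1+6\eps)w(v)$, so $\{x_e/(1+6\eps)\}$ is a valid fractional matching and weak LP-duality yields $\sum_{e}x_e\le(1+6\eps)\,OPT$.

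For near-tightness, for $v\in C$ I would establish $S_k\ge(1-16\eps)w_k'(v)$ by a symmetric case analysis: $S_k\ge w_k'(v)$ if $v$ froze at Line~(\ref{algoline:freeze v}); if $v$ froze at Line~(\ref{algoline:inner freeze}) at iteration $t$ then, from $\tilde y_{v,t}^{MPC}\ge\mathcal{T}_{v,t}w_k'(v)\ge(1-4\eps)w_k'(v)$ and the triangle inequality applied to the two bounds of Lemma~\ref{lemma:compare}, one gets $S_k=y_{v,t}^{MPC}\ge(1-16\eps)w_k'(v)$; and if $v$ froze in the centralized run, the near-tightness estimate inside the proof of Proposition~\ref{prop:central apx} gives $(1-4\eps)w_k'(v)$. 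Telescoping (using $S_j\ge 0$) then gives $\sum_{e\ni v}x_e\ge(1-16\eps)w(v)$ for every $v\in C$. Since each edge has at most two endpoints in $C$, $2\sum_{e}x_e\ge\sum_{v\in C}\sum_{e\ni v}x_e\ge(1-16\eps)w(C)$, so $w(C)\le\frac{2}{1-16\eps}\sum_e x_e\le\frac{2(1+6\eps)}{1-16\eps}\,OPT$; running the algorithm with a parameter that is a constant-factor smaller than $\eps$ (which leaves the $\bigo(\log\log d)$ round bound intact up to the hidden constant) brings this below $(2+30\eps)\,OPT$.

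The main obstacle is not conceptual depth, since Lemma~\ref{lemma:compare} carries the real analytic weight, but the careful across-phases bookkeeping: making the residual-weight telescoping precise and checking $w_j'(v)>0$ throughout; identifying the finalized incident weight of a frozen vertex with $y_{v,t}^{MPC}$ at exactly the iteration $t$ it froze (so Lemma~\ref{lemma:compare} is applicable at that iteration); and verifying that the edges of $E[V^{inactive};V^{high}]$ zeroed out at Line~(\ref{algoline:update inactive zero}) never undercut near-tightness --- they cannot, since the residual-weight accounting never charges those edges to their endpoints. The genuinely delicate point is matching-validity for vertices that freeze during the local simulation: their incident weight could a priori exceed $w_k'(v)$, and it is precisely the comparison with the valid centralized run, via Lemma~\ref{lemma:compare}, that caps the overshoot at a $(1+6\eps)$ factor; the remaining constant arithmetic is routine.
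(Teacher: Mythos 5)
Your proposal is correct and follows essentially the same route as the paper: cover validity, the lower bound $\sum_{e\ni v}x_e^{MPC}\ge(1-16\eps)w(v)$ for frozen $v$ via Lemma~\ref{lemma:compare}, the upper bound $\sum_{e\ni v}x_e^{MPC}\le(1+6\eps)w(v)$ via Observation~\ref{obs:central valid} plus Lemma~\ref{lemma:compare}, and weak LP-duality as in Proposition~\ref{prop:central apx}; your $S_j$-telescoping just spells out the identity $\sum_{e\ni v}x_e^{MPC}=w(v)-w'(v)+y_{v,t}^{MPC}$ that the paper uses implicitly. Your closing remark about rescaling $\eps$ to absorb the final constant is a reasonable (and arguably needed) touch-up of the paper's last arithmetic step.
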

\begin{proof}
The returned vertex cover consists of the vertices that were frozen in our MPC algorithm.
When vertex $v$ is frozen at Line~(\ref{algoline:inner freeze}) in some phase, we have $\tilde y^{MPC}_{v,t}\ge \mathcal{T}_{v,t}w'(v) \ge (1-4\epsilon) w'(v)$.
By Lemma~\ref{lemma:compare}, this implies $y_{v,t}^{MPC} \ge (1-16\epsilon)w'(v)$.
Recall that $w(v)-w'(v)\ge 0$ is the total weight of $v$'s incident edges that were frozen in previous phases.
So we have $$\sum_{e\ni v} x_e^{MPC} = w(v)-w'(v) + y_{v,t}^{MPC}  \ge (1-16\epsilon)w(v).$$
Note that this inequality holds as well for those vertices $v$ that became frozen at Line~(\ref{algoline:freeze v}) or in the final centralized phase (Line~(\ref{algoline:central})). 

On the other hand,  by Observation~\ref{obs:central valid}, $y_{u,t} \le w'(u)$ always holds for all vertices $u\in V^{high}$. Then by Lemma~\ref{lemma:compare}, we have $y_{u,t}^{MPC} \le (1+6\eps)w'(u)$ with high probability. 
Similarly, this implies
\[\sum_{e\ni u} x_e^{MPC} \le (1+6\epsilon)w(u)\]
in the end.

Hence, $\{x_e^{MPC}/(1+6\eps)\}_{e\in E}$ is a valid fractional matching, and by the same argument as in the proof of Proposition~\ref{prop:central apx}, we have 
\[w(C)\le 2 \sum_{e\ni v}\frac{x^{MPC}_e}{1-16\eps}\le \frac{2(1+6\eps)OPT}{1-16\eps} \le (2+30\eps)OPT. \qedhere\]
\end{proof}
The rest of this section is devoted to proving Lemma~\ref{lemma:compare}.

In a phase, we say that vertex $v$ becomes \emph{bad} if it gets frozen in the centralized algorithm 
and not in MPC simulation (or the other way around).
Once bad, the vertex remains bad throughout the whole phase.
If a vertex is not bad, we say it is \emph{good}.

In order to bound the weight of bad vertices, we will show that the estimated values $\tilde{y}^{MPC}_{v,t}$ and $y_{v,t}^{MPC}$ stay close to the actual values $y_{v,t}$ during one phase of MPC simulation.
Once we establish that, we can use Lemma \ref{lemma:probability bad} below to bound the total weight of adjacent vertices which turn bad in a particular iteration.

\begin{lemma}
Suppose $|y_{v,t} - \tilde{y}^{MPC}_{v,t}| \leq \sigma w'(v)$ holds for all vertices $v$ that are active in both the centralized algorithm and MPC simulation.
Then, $v$ becomes bad in iteration $t$ with probability at most $\sigma/\epsilon$ and independently of other vertices.
\label{lemma:probability bad}
\end{lemma}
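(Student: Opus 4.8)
The plan is to argue via the randomness of the threshold $\mathcal{T}_{v,t}$, which is uniform on $[1-4\eps,1-2\eps]$, an interval of length $2\eps$. Fix a vertex $v$ that is active in both the centralized algorithm and the MPC simulation at the start of iteration $t$. In the centralized algorithm, $v$ freezes in iteration $t$ precisely when $y_{v,t}\ge \mathcal{T}_{v,t}\cdot w'(v)$, i.e.\ when $\mathcal{T}_{v,t}\le y_{v,t}/w'(v)$; in the MPC simulation, $v$ freezes precisely when $\mathcal{T}_{v,t}\le \tilde y^{MPC}_{v,t}/w'(v)$. Hence $v$ becomes bad in iteration $t$ if and only if $\mathcal{T}_{v,t}$ falls strictly between $y_{v,t}/w'(v)$ and $\tilde y^{MPC}_{v,t}/w'(v)$. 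By hypothesis the gap between these two thresholds is at most $\sigma w'(v)/w'(v)=\sigma$, so the probability that the uniform variable $\mathcal{T}_{v,t}$ lands in this window is at most $\sigma/(2\eps)\le \sigma/\eps$.

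The one subtlety I would be careful about is the conditioning: the quantities $y_{v,t}$, $\tilde y^{MPC}_{v,t}$, and the very event that $v$ is still active at the start of iteration $t$, all depend on thresholds $\mathcal{T}_{u,s}$ — but only for $s<t$ (for any vertex $u$) and, crucially, never on $\mathcal{T}_{v,t}$ itself or on any $\mathcal{T}_{u,t}$ with $u$ another vertex active in iteration $t$. This is exactly the purpose of the random-thresholding device and of using a fresh independent threshold per vertex per iteration. So I would condition on all the randomness generated before iteration $t$ (the partition $V_1,\dots,V_m$ and all thresholds $\mathcal{T}_{u,s}$ with $s<t$); under this conditioning the values $y_{v,t}/w'(v)$ and $\tilde y^{MPC}_{v,t}/w'(v)$ are fixed numbers, the set of vertices active at the start of iteration $t$ is determined, the hypothesis $|y_{v,t}-\tilde y^{MPC}_{v,t}|\le \sigma w'(v)$ holds for each such $v$, and the thresholds $\{\mathcal{T}_{v,t}\}_v$ for iteration-$t$ are still independent and uniform on $[1-4\eps,1-2\eps]$. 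Then for each such $v$, as computed above, $\Pr[v\text{ becomes bad in iteration }t]\le \sigma/(2\eps)$, and since these probabilities are over disjoint pieces of randomness ($\mathcal{T}_{v,t}$ for distinct $v$), the "bad in iteration $t$" events are mutually independent (conditioned on the pre-iteration randomness). Averaging over the conditioning preserves the per-vertex bound; the independence claim is the conditional statement, which is what the rest of the analysis will use.

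The main obstacle — though it is more a matter of careful bookkeeping than of genuine difficulty — is making precise the claim that $y_{v,t}$, $\tilde y^{MPC}_{v,t}$ and the active/frozen status at the start of iteration $t$ are measurable with respect to only $\{\mathcal{T}_{u,s}: s<t\}$ together with the partition. This requires observing that in both algorithms the edge weights $x^{(\cdot)}_{e,t}$ and the freeze decisions up through iteration $t-1$ are deterministic functions of the earlier thresholds and the partition, so that entering iteration $t$ everything except $\{\mathcal{T}_{u,t}\}_u$ is already fixed. Given that, the concluding probability estimate is immediate from the uniform distribution of $\mathcal{T}_{v,t}$ on an interval of length $2\eps$, and the independence across vertices is inherited from the independence of the thresholds $\{\mathcal{T}_{v,t}\}_v$.
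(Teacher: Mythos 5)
Your proof is correct and follows essentially the same approach as the paper: the vertex can only become bad if the uniformly random threshold $\mathcal{T}_{v,t}$ lands in a window of width at most $2\sigma$ (the paper) or exactly the gap $|y_{v,t}-\tilde y^{MPC}_{v,t}|/w'(v)\le\sigma$ (your version, giving the slightly sharper $\sigma/(2\eps)$), out of an interval of length $2\eps$. Your explicit treatment of the conditioning on pre-iteration randomness and the resulting independence across vertices is more careful than the paper's proof, which asserts independence without comment, but it is the same underlying argument.
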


\begin{proof}
For a vertex $v$ to become bad, the estimate $\tilde{y}^{MPC}_{v,t}$ has to be on the other side of the threshold as $y_{v,t}$.
Call the effective threshold $T_{v,t} = \mathcal{T}_{v,t} w'(v)$.
Notice that when $|\tilde{y}^{MPC}_{v,t} - T_{v,t}| > \sigma w'(v)$ it is not possible for vertex $v$ to become bad.
Therefore, only if $T_{v,t}$ falls in the interval of size $2\sigma w'(v)$ around $\tilde{y}^{MPC}_{v,t}$, $v$ might become bad. $\mathcal{T}_{v,t}$ is chosen uniformly at random from an interval with size $2\epsilon$, this is equivalent to pick $T_{v,t}$ uniformly at random from the scaled interval with size $2\epsilon w'(v)$.
Therefore the possibility of $v$ becoming bad in iteration $t$ is upper-bounded by $2\sigma w'(v)/(2\epsilon w'(v)) = \sigma/\epsilon$.
\end{proof}

In order to compare $y_{v,t}$ and $\tilde y_{v,t}^{MPC}$, we introduce an intermediate quantity defined as follows.
\begin{definition}
\label{defini:intermediate}
For $v\in V^{high}$, let \[\tilde y_{v,t}  := 2m^{-0.2}\cdot  15^t +m\cdot \sum_{e\ni v; e \in E[V_i]}x_{e,t},\]
where $v \in V_i$. Note that $\tilde y_{v,0} = \tilde y^{MPC}_{v,0}$, since $x_{e,0} = x^{MPC}_{e,0}$.
\end{definition}

Instead of directly calculating a bound on $|y_{v,t} - y_{v,t}^{MPC}|$ and $|\tilde y_{v,t}-\tilde y_{v,t}^{MPC}|$, we use a slightly different notion introduced below.

\begin{definition}[Weight difference] 
Let
$$
\diff(v,t) := \sum_{e\ni v; e \in E[V^{high}]} |x_{e,t} - x_{e,t}^{MPC}|.
$$
And, let
$$
\diff^{local}(v,t) := m\cdot \sum_{e\ni v; e\in E[V_i]} |x_{e,t} - x_{e,t}^{MPC}|,
$$
where $v\in V_i$.
\\
Note that $|y_{v,t} - y_{v,t}^{MPC}| \leq \diff(v,t)$, and $|\tilde y_{v,t}- \tilde y^{MPC}_{v,t}| \le \diff^{local}(v,t)$.
\end{definition}

Now we prove the concentration lemma, which immediately implies a bound on $\tilde y_{v,t}-y_{v,t}$.
\begin{lemma}[Concentration]
 Let $1\le r \le m^{1.2}$, and let $U\subseteq V^{high}$ be a random subset where each vertex is included with probability $1/r$ independently.
 For any $0\le t\le  I$ and any vertex $v \in V^{high}$, with high probability,
$$
\Big \lvert y_{v, t} - r\sum_{u\in U; (v,u)\in E[V^{high}]}x_{(v,u),t}\Big \rvert < m^{-0.2} \cdot w'(v).
$$
\label{lemma:concentration}
\end{lemma}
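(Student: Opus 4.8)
The plan is to regard the quantity $r\sum_{u\in U;\,(v,u)\in E[V^{high}]}x_{(v,u),t}$ as a sum of bounded independent random variables and apply the Chernoff bound from the preliminaries. Fix $v\in V^{high}$ and $0\le t\le I$, and treat the centralized edge weights $\{x_{(v,u),t}\}$ as given (they are a deterministic function of the phase's random thresholds and do not depend on $U$). The first step is a bound on the maximum incident edge weight throughout the phase: since $v\in V^{high}$ we have $d(v)\ge d^{0.95}=m^{1.9}$ (recall $d=m^2$), and by the non-uniform initialization together with the orientation discussion, $x_{(v,u),0}=x_{(v,u),0}^{MPC}=\min\{w'(v)/d(v),w'(u)/d(u)\}\le w'(v)/d(v)$ for every incident edge $(v,u)\in E[V^{high}]$. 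Centralized edge weights grow by at most a factor $1/(1-\epsilon)$ per iteration, and $(1/(1-\epsilon))^{I}\le 15^{I}=m^{1/10}$ for $\epsilon$ small enough (as $1/(1-\epsilon)\le 15$). Hence $x_{(v,u),t}\le m^{1/10}\cdot w'(v)/m^{1.9}=w'(v)/m^{1.8}$ for all $0\le t\le I$.

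Next I would define, for each neighbor $u$ of $v$ with $(v,u)\in E[V^{high}]$, the variable $Z_u:=\frac{m^{0.6}}{w'(v)}\cdot r\,x_{(v,u),t}\,\mathbf{1}[u\in U]$. Using $r\le m^{1.2}$ and the edge-weight bound, $r\,x_{(v,u),t}\le m^{1.2}\cdot w'(v)/m^{1.8}=w'(v)/m^{0.6}$, so each $Z_u\in[0,1]$; moreover the $Z_u$ are independent, since the events $\{u\in U\}$ are independent and the coefficients $x_{(v,u),t}$ are fixed. Setting $Z:=\sum_u Z_u$, we have $\E[Z]=\frac{m^{0.6}}{w'(v)}\sum_u x_{(v,u),t}=\frac{m^{0.6}}{w'(v)}\,y_{v,t}=:\mu$, and the claimed inequality is precisely $|Z-\mu|<m^{0.4}$.

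I would then bound $\mu$ from above and split into cases. By Observation~\ref{obs:central valid} applied to the centralized run on $V^{high}$ with vertex weights $w'$, we have $y_{v,t}\le w'(v)$, hence $\mu\le m^{0.6}$. If $\mu\le m^{0.4}$, apply the Chernoff bound with $\delta=m^{0.4}/\mu\ge 1$ to get $\Pr[|Z-\mu|\ge m^{0.4}]\le 2\exp(-m^{0.4}/3)$; if $m^{0.4}<\mu\le m^{0.6}$, apply it with $\delta=m^{0.4}/\mu\in(0,1)$ to get $\Pr[|Z-\mu|\ge m^{0.4}]\le 2\exp(-m^{0.8}/(3\mu))\le 2\exp(-m^{0.2}/3)$ (the case $\mu=0$ is trivial, as then $Z=0$). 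Since $m=\sqrt d>\log^{15}n$, we have $m^{0.2}>\log^{3}n$, so the failure probability is at most $2\exp(-\log^{3}n/3)=n^{-\omega(1)}$; a union bound over the at most $n(I+1)$ pairs $(v,t)$ completes the proof, and rescaling turns $|Z-\mu|<m^{0.4}$ into $\big|\,r\sum_{u\in U}x_{(v,u),t}-y_{v,t}\,\big|<m^{-0.2}w'(v)$.

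The computations are routine; the only delicate point is choosing the exponents so that both tails of the Chernoff bound are negligible simultaneously. The decisive ingredient is that $y_{v,t}\le w'(v)$ caps the rescaled mean $\mu$ at $m^{0.6}$, which is exactly what makes the $\delta<1$ regime work; the $\delta\ge 1$ regime is automatic because the target absolute deviation (after rescaling, $m^{0.4}$) already dominates $\log n$.
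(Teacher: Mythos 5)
Your proof is correct and follows essentially the same route as the paper's: both rescale the sampled incident weight into a sum of independent $[0,1]$-valued variables using the per-edge bound $x_{(v,u),t}\le \frac{w'(v)}{d(v)}(1/(1-\eps))^t$ together with $d(v)\ge d^{0.95}=m^{1.9}$ and $r\le m^{1.2}$, cap the mean via $y_{v,t}\le w'(v)$ (Observation~\ref{obs:central valid}), and apply the two-case Chernoff bound. The only differences are cosmetic choices of the normalization constant and of $\delta$ (the paper takes $\delta=m^{0.65}\sqrt{d(v)}/(r\E[X_v])$, yielding a deviation of $m^{0.65}w'(v)/(\sqrt{d(v)}(1-\eps)^t)\le m^{-0.2}w'(v)$, whereas you aim directly at $m^{-0.2}w'(v)$), and both yield failure probabilities of the form $\exp(-m^{\Omega(1)})$, which suffices since $m>\log^{15}n$.
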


\begin{proof}
Denote $w_{out}(v) = \frac{w'(v)}{d(v)}\big / (1-\eps)^t$.
For every incident edge $e=(u,v)$, we have \[ \textstyle 0\le x_{e,t} \le x_{e,0}/(1-\eps)^t= \min\left \{\frac{w'(u)}{d(u)},\frac{w'(v)}{d(v)}\right \}\big / (1-\eps)^t\le w_{out}(v).\]
Let \[ \textstyle X_v := \sum_{u \in U: (u,v)\in E[V^{high}]} x_{(u,v),t}/w_{out}(v),\] which is the sum of independent random variables in interval $[0,1]$. 
By the definition of set $U$ it is clear that \[ \textstyle \E[X_v] = \frac{1}{r}\sum_{e\ni v; e\in E[V^{high}]} x_{e,t}/w_{out}(v).\]
By Observation~\ref{obs:central valid}, $\sum_{e\ni v; e\in E[V^{high}]} x_{e,t} \le w'(v) \le d(v)w_{out}(v)$. We then obtain 
\begin{equation}
    d(v) \ge r\E[X_v].\label{eqn:temp}
\end{equation}
Let $$\delta:= \frac{m^{0.65}\sqrt{d(v)}}{r \E[X_v]}.$$ We use Chernoff bound and analyze two cases:
\begin{itemize}
    \item If $\delta<1$, 
    \begin{align*}
      \Pr\big [|X_v - \E[X_v]| \ge \delta \E[X_v]\big ] & \le 2\exp(-\delta^2\E[X_v]/3) \\
       & = 2\exp\left (-\frac{m^{1.3}d(v)}{3r^2\E[X_v]}\right )\\
       &\le 2\exp(-m^{0.1}/3),
    \end{align*}
    where the last inequality follows from (\ref{eqn:temp}) and $r\le m^{1.2}$.
    \item If $\delta \ge 1$, 
    \begin{align*}
       \Pr\big [|X_v - \E[X_v]| \ge \delta \E[X_v]\big ] & \le 2\exp(-\delta\E[X_v]/3) \\ 
       & = 2\exp\left (-\frac{m^{0.65}\sqrt{d(v)}}{3r}\right )\\
       & \le 2\exp(-m^{0.4}/3),
    \end{align*}
    where the last inequality follows from $d(v) \ge d^{0.95}=m^{1.9}$ for all $v\in V^{high}$ and $r\le m^{1.2}$.
\end{itemize}
As $m = \sqrt{d}>\log^{15}n$, we conclude that with high probability $|X_v - \E[X_v]| < \delta \E[X_v]$, or equivalently
\begin{align*}
\textstyle \lvert y_{v, t} - r\sum_{u\in U; (v,u)\in E[V^{high}]}x_{(v,u),t} \rvert
 &< r w_{out}(v)\cdot \delta \E[X_v] \\
&= \frac{m^{0.65}}{\sqrt{d(v)}}\cdot w'(v)/(1-\eps)^t\\
&\le m^{-0.3}\cdot w'(v)/(1-\eps)^t\\
&\le m^{-0.2}\cdot w'(v),
\end{align*}
where the last inequality follows from $(1/(1-\epsilon))^t \le (1/(1-\epsilon))^I \leq m^{0.1}$.
\end{proof}

\begin{corollary}
For any $0\le t\le  I$ and any vertex $v\in V^{high}$, with high probability,
\[(2\cdot 15^t-1)m^{-0.2} w'(v)\le \tilde{y}_{v,t} - y_{v, t} \le (2\cdot 15^t+1)m^{-0.2} w'(v).\]
\label{cor:initial estimate}
\end{corollary}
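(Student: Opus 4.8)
The plan is to derive Corollary~\ref{cor:initial estimate} directly from the concentration bound of Lemma~\ref{lemma:concentration}, specialized to the sampling rate $r=m$ used by the MPC partition. First I would fix a vertex $v\in V^{high}$ and an iteration $0\le t\le I$, and let $i$ be the machine index with $v\in V_i$. The key structural point is that, conditioned on $v\in V_i$, the random partition of Line~(\ref{algoline:def vi}) places every other vertex of $V^{high}$ into $V_i$ independently with probability exactly $1/m$, since each vertex picks its machine uniformly and independently. Hence the neighborhood-restricted version of $V_i$ plays exactly the role of the random subset $U$ in Lemma~\ref{lemma:concentration} with $r=m$; the hypothesis $1\le r\le m^{1.2}$ holds trivially, and whether or not $v$ itself is regarded as an element of $U$ is irrelevant since $(v,v)$ is not an edge. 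Moreover, because $v\in V_i$ we have $\sum_{e\ni v;e\in E[V_i]} x_{e,t} = \sum_{u\in V_i:(v,u)\in E[V^{high}]} x_{(v,u),t}$, matching the sum in the lemma.

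Applying Lemma~\ref{lemma:concentration} then gives, with high probability,
\[
\left| y_{v,t} - m\sum_{e\ni v;e\in E[V_i]} x_{e,t}\right| < m^{-0.2}\,w'(v).
\]
By Definition~\ref{defini:intermediate}, $\tilde y_{v,t}$ equals the deterministic bias term $2\cdot 15^t m^{-0.2}w'(v)$ plus $m\sum_{e\ni v;e\in E[V_i]}x_{e,t}$, so $\tilde y_{v,t}-y_{v,t}$ equals this bias term plus the quantity $m\sum_{e\ni v;e\in E[V_i]}x_{e,t}-y_{v,t}$, which by the displayed inequality lies in $[-m^{-0.2}w'(v),\,m^{-0.2}w'(v)]$. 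Rearranging yields
\[
(2\cdot 15^t-1)m^{-0.2}w'(v)\le \tilde y_{v,t}-y_{v,t}\le (2\cdot 15^t+1)m^{-0.2}w'(v),
\]
as claimed. I would finish by a union bound over all $v\in V^{high}$ and all $t\in\{0,1,\dots,I\}$: since $|V^{high}|\le n$ and $I=O(\log m)=O(\log n)$, there are only polynomially many events, so the bound holds simultaneously for all $v,t$ with high probability.

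I do not expect a genuine obstacle here — all of the probabilistic content is already carried by Lemma~\ref{lemma:concentration}. The only points that require (minor) care are verifying that the per-machine subset $V_i$, restricted to $v$'s neighbors, really has the independent $1/m$-inclusion structure demanded by that lemma, which is immediate from the partition being generated by independent uniform choices; and keeping track of the deterministic bias term through the elementary rearrangement. If anything merits a second look, it is the bookkeeping around conditioning on $v\in V_i$ versus the independence of the remaining vertices' machine assignments, but this conditioning only relabels which machine is called ``$i$'' and does not disturb the relevant randomness.
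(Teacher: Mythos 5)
Your proposal is correct and follows exactly the route of the paper, whose proof of this corollary is just the one-line remark that it ``directly follows from the definition of $V_i$, Definition~\ref{defini:intermediate}, and Lemma~\ref{lemma:concentration}''; you have simply filled in the details (the conditioning on $v\in V_i$, the instantiation $r=m$, and the union bound), all of which are sound. The only thing worth noting is that you read the bias term in Definition~\ref{defini:intermediate} as $2\cdot 15^t m^{-0.2}w'(v)$ rather than the literal $2\cdot 15^t m^{-0.2}$ printed there; this is the reading the corollary requires and is evidently the intended one.
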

\begin{proof}
The proof directly follows from the definition of $V_i$, Definition~\ref{defini:intermediate}, and Lemma~\ref{lemma:concentration}.
\end{proof}
Now we will use induction to show that, $y_{v,t},\tilde y_{v,t},y_{v,t}^{MPC}$, and $\tilde y_{v,t}^{MPC}$ stay close to each other, for every good vertex $v$.

\begin{lemma}
\label{lemma:evolution}
For every $0\le t\le I$, the following hold with high probability for every $v\in V^{high}$ that is good by the start of iteration $t$:
\begin{enumerate}
 \setlength{\itemsep}{2pt}
\item 
 $\diff(v,t) \le m^{-0.2}\cdot 15^t \cdot w'(v)$,
 \item $\diff^{local}(v,t) \le m^{-0.2}\cdot 15^t \cdot w'(v)$, and,
 \item 
  $0\le \tilde y_{v,t}^{MPC} -y_{v,t} \le 4\cdot m^{-0.2}\cdot 15^t \cdot w'(v).$
\end{enumerate}
\end{lemma}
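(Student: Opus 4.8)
The plan is to induct on $t$, proving all three statements simultaneously for each good vertex. The base case $t=0$ is essentially trivial: $x_{e,0}=x_{e,0}^{MPC}$ by definition, so $\diff(v,0)=\diff^{local}(v,0)=0$, and Corollary~\ref{cor:initial estimate} gives $\tilde y_{v,0}-y_{v,0}\in[m^{-0.2}w'(v), 3m^{-0.2}w'(v)]\subseteq[0, 4m^{-0.2}w'(v)]$, so statement~3 holds at $t=0$. For the inductive step, assume the three bounds hold at iteration $t$ for all good vertices; I want to establish them at $t+1$ for every vertex $v$ that is still good at the start of iteration $t+1$.

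First I would handle the evolution of $\diff$ and $\diff^{local}$. Fix a good vertex $v$ at the start of iteration $t+1$. For each incident edge $e=(v,u)\in E[V^{high}]$, I would compare how $x_{e,t}$ and $x_{e,t}^{MPC}$ update from iteration $t$ to $t+1$. If $e$ was active in both the centralized run and the MPC run at iteration $t$, both get divided by $(1-\eps)$, so $|x_{e,t+1}-x_{e,t+1}^{MPC}| = |x_{e,t}-x_{e,t}^{MPC}|/(1-\eps)$. If $e$ is frozen in both, the difference is unchanged. The interesting case is when $e$ has a different status in the two runs — but that happens only when one of $u,v$ is bad, and since $v$ is good, it must be $u$ that is bad. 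So the ``discrepancy'' contribution to $\diff(v,t+1)$ beyond the $\frac{1}{1-\eps}$ scaling comes entirely from edges to bad neighbors. For such an edge, $|x_{e,t+1}-x_{e,t+1}^{MPC}| \le x_{e,0}/(1-\eps)^{t+1} \le \frac{w'(v)}{d(v)}m^{0.1}$ (using $(1/(1-\eps))^{I}\le m^{0.1}$, as in the concentration proof), which is at most $\frac{w'(v)}{m^{1.8}}$ since $d(v)\ge m^{1.9}$. The number of bad neighbors of $v$ among the (at most $d(v)$) incident vertices needs to be controlled: by Lemma~\ref{lemma:probability bad}, combined with the inductive bound on $\tilde y_{u,t}-y_{u,t}$ (statement~3 gives $\sigma = 4\cdot 15^t m^{-0.2}$ as the relevant slack, but this is too large for late $t$ — so I'd need to be more careful, applying the lemma iteration-by-iteration with $\sigma$ controlled by statement 3 at each step and summing a geometric-type bound over the first $t$ iterations). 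One shows the expected number of bad neighbors in any single iteration is tiny, and a Chernoff/union bound argument gives w.h.p.\ at most, say, $O(d(v)\cdot 15^t m^{-0.2}/\eps)$ — wait, this is already too many. The cleaner route: bound the \emph{weight} contributed by bad edges rather than their count, using that each bad edge's weight difference is $\le w'(v)/m^{1.8}$, and that w.h.p.\ the number of bad neighbors per iteration is at most $\mathrm{poly}(m)\cdot (\text{failure prob from Lemma~\ref{lemma:probability bad}})$ plus a small additive term; summing over $t\le I = O(\log m)$ iterations and multiplying by $w'(v)/m^{1.8}$ keeps the total well below the $m^{-0.2}\cdot 15^t w'(v)$ slack we are allowed to accumulate. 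Concretely, the recursion $\diff(v,t+1) \le \frac{1}{1-\eps}\diff(v,t) + (\text{bad-edge term})$ unrolls to a bound dominated by $(1/(1-\eps))^t$ times the worst per-step error, and since $15 > 1/(1-\eps)$ for $\eps<1/2$, the factor $15^{t}$ absorbs both the geometric growth and the accumulated bad-edge terms. The same computation works for $\diff^{local}$, using that $m\cdot\sum_{e\ni v, e\in E[V_i]}|\cdot|$ is, by the concentration Lemma~\ref{lemma:concentration} applied with $r=m$ to the bad-edge weights, within $m^{-0.2}w'(v)$ of its global counterpart.

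Third, for statement~3 at $t+1$: write $\tilde y_{v,t+1}^{MPC} - y_{v,t+1} = (\tilde y_{v,t+1}^{MPC} - \tilde y_{v,t+1}) + (\tilde y_{v,t+1} - y_{v,t+1})$. The second term is controlled directly by Corollary~\ref{cor:initial estimate}: it lies in $[(2\cdot 15^{t+1}-1)m^{-0.2}w'(v),\ (2\cdot 15^{t+1}+1)m^{-0.2}w'(v)]$. The first term is bounded in absolute value by $\diff^{local}(v,t+1)\le m^{-0.2}\cdot 15^{t+1}w'(v)$ from statement~2. Adding: $\tilde y_{v,t+1}^{MPC}-y_{v,t+1} \ge (2\cdot 15^{t+1}-1-15^{t+1})m^{-0.2}w'(v) = (15^{t+1}-1)m^{-0.2}w'(v) \ge 0$, and $\le (2\cdot 15^{t+1}+1+15^{t+1})m^{-0.2}w'(v)\le 4\cdot 15^{t+1}m^{-0.2}w'(v)$. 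This closes the induction, modulo taking a union bound over all $v\in V^{high}$, all $t\le I$, and the (polynomially many) Chernoff events, each of which fails with probability $n^{-\omega(1)}$ since $m>\log^{15}n$.

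\textbf{Main obstacle.} The delicate point is bounding the cumulative effect of bad neighbors on $\diff(v,t)$ and $\diff^{local}(v,t)$. One must argue that, although statement~3 only gives a slack $\sigma_t = 4\cdot 15^t m^{-0.2}$ that grows with $t$, the probability of a neighbor going bad at iteration $t$ (Lemma~\ref{lemma:probability bad}) is $\sigma_t/\eps$, and when one sums the expected bad-edge weight contributions $\sum_{s\le t}(1/(1-\eps))^{t-s}\cdot (\text{\# bad at step }s)\cdot w'(v)/m^{1.8}$ and applies concentration, the total stays below $m^{-0.2}\cdot 15^t w'(v)$ because $m^{-1.8}$ is such a severe discount and $I=O(\log m)$ is short. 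Getting the bookkeeping on these accumulated deviations exactly right — in particular ensuring the constant stays at $15$ rather than blowing up — is where the real work lies; everything else is the routine geometric-series and Chernoff machinery already deployed in Lemma~\ref{lemma:concentration}.
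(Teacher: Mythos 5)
There is a genuine gap, and it sits exactly where you flagged uncertainty: the contribution of edges to bad neighbors. Your accounting bounds each ``different-status'' edge by the worst-case cap $x_{e,0}(1-\eps)^{-(t+1)}\le \frac{w'(v)}{d(v)}m^{0.1}$ and then multiplies by a \emph{count} of bad neighbors. By Lemma~\ref{lemma:probability bad} with the slack from Item~(3), each neighbor turns bad with probability up to $4\cdot 15^t m^{-0.2}/\eps$; this is polynomially small in $m$ but it is \emph{not} a negligible failure probability, and $d(v)$ can be as large as $n\gg \poly(m)$, so the number of bad neighbors concentrates around $d(v)\cdot 4\cdot 15^t m^{-0.2}/\eps$. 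Multiplying by your per-edge cap gives roughly $4\cdot 15^t m^{-0.1}w'(v)/\eps$, which overshoots the allowed budget $m^{-0.2}\cdot 15^{t+1}w'(v)$ by a factor of order $m^{0.1}$ --- this is the ``too many'' you noticed, and the proposed repair (``number of bad neighbors per iteration is at most $\poly(m)\cdot(\text{failure prob})$ plus a small additive term'') is not true, since the bad-neighbor probability is not a w.h.p.\ failure event. The paper closes this by two ideas your sketch is missing: (i) when a neighbor $u$ turns bad in iteration $t$, the discrepancy it creates on the edge is only $|x_{e,t+1}-x_{e,t+1}^{MPC}|\le 2\eps\, x_{e,t}$, i.e.\ proportional to the edge's \emph{current} weight rather than a uniform cap; and (ii) the \emph{weighted} sum $\sum_{u\text{ newly bad}}x_{(v,u),t}$ is controlled by applying Lemma~\ref{lemma:concentration} to the random subset of newly bad vertices (independent by Lemma~\ref{lemma:probability bad}), giving at most $\frac{4\cdot 15^t m^{-0.2}}{\eps}\bigl(y_{v,t}+m^{-0.2}w'(v)\bigr)$, hence a new-bad contribution of at most $12\, m^{-0.2}15^t w'(v)$; the weights summing to $y_{v,t}\le w'(v)$ is exactly what saves the factor $m^{0.1}$ that the cap-and-count argument loses.

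A secondary inaccuracy: your recursion $\diff(v,t+1)\le \frac{1}{1-\eps}\diff(v,t)+(\text{bad-edge term})$ is not right for pre-existing discrepancies. An edge frozen in one run but still active in the other has its difference grow by the factor $\frac{1-(1-\eps)^{1+k}}{1-(1-\eps)^{k}}$, which is about $2-\eps$ already for $k=1$ --- larger than $1/(1-\eps)$. The paper bounds this growth by $3$ per iteration, and it is the combination $3\cdot m^{-0.2}15^t w'(v)+12\, m^{-0.2}15^t w'(v)=m^{-0.2}15^{t+1}w'(v)$ that produces the constant $15$. Your derivation of Item~(3) from Item~(2) and Corollary~\ref{cor:initial estimate}, and the overall induction structure, do match the paper; but without the $2\eps\, x_{e,t}$ bound and the concentration-on-the-weighted-sum step, the central estimate does not close.
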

\begin{proof}
Note that Item~(3) directly follows from Item~(2) and Corollary~\ref{cor:initial estimate}:
\begin{align*}
      \tilde y^{MPC}_{v,t} - y_{v,t} &=  (\tilde y^{MPC}_{v,t} - \tilde y_{v,t}  ) + (\tilde y_{v,t} -  y_{v,t})\\
    & \le \diff^{local}(v,t) + m^{-0.2}(2\cdot 15^t+1) w'(v)   \\
    & \le m^{-0.2}\cdot 4\cdot 15^t\cdot w'(v),
\end{align*}
and,
\begin{align*}
      \tilde y^{MPC}_{v,t} - y_{v,t} &=  (\tilde y^{MPC}_{v,t} - \tilde y_{v,t}  ) + (\tilde y_{v,t} -  y_{v,t})\\
    & \ge -\diff^{local}(v,t) + m^{-0.2}(2\cdot 15^t-1) w'(v)   \\
    & \ge 0.
\end{align*}

In the following we will prove Item~(1) and Item~(2) by induction. 
Assuming the statements hold for some $t\ge 0$, we bound $\diff(v,t+1)$ and $\diff^{local}(v,t+1)$ by analyzing the evolution of weight differences in iteration $t$, for all $v$ that remain good after iteration $t$ finishes.

\subparagraph*{\textbf{Old bad vertices:}} If $|x_{e,t}-x_{e,t}^{MPC}|>0$ for some $e=(v,u)$, then we must have $x_{e,t} = x_{e,0}/(1-\eps)^{t_1}$, $x_{e,t}^{MPC} = x_{e,0}/(1-\eps)^{t_2}$ for some $t_1\neq t_2$, that is, $e$ was frozen at different iterations in the centralized and the MPC algorithms, or $e$ was frozen in one algorithm but is still active in the other . In the former case, we have  $|x_{e,t}-x_{e,t}^{MPC}| = |x_{e,t+1}-x_{e,t+1}^{MPC}|$. In the latter case, we assume w.l.o.g. $t_1<t_2 = t$, and then
\begin{align*}
    \frac{|x_{e,t+1}-x_{e,t+1}^{MPC}|}{|x_{e,t}-x_{e,t}^{MPC}|}  &=
    \frac{|x_{e,0}/(1-\eps)^{t_1}-x_{e,0}/(1-\eps)^{t_2+1}|}{|x_{e,0}/(1-\eps)^{t_1}-x_{e,0}/(1-\eps)^{t_2}|}\\
    &= \frac{1-(1-\eps)^{1+t_2-t_1}}{1-(1-\eps)^{t_2-t_1}}\\
    & < 3,
\end{align*}
for small enough $\eps$.

\subparagraph*{\textbf{New bad vertices:}}
We now analyze the edges $e=(v,u)$ such that $|x_{e,t}-x_{e,t}^{MPC}|=0$.
If $|x_{e,t+1}-x_{e,t+1}^{MPC}|>0$, then it must be that $u$ turns bad in iteration $t$ (since $v$ remains good), and we have \[|x_{e,t+1}-x_{e,t+1}^{MPC}| = x_{e,t}/(1-\eps)-x_{e,t} \le 2\eps x_{e,t}.\]

By Item~(3) and Lemma~\ref{lemma:probability bad}, each $u$ that was good by the start of iteration $t$ turns bad with probability at most \[4m^{-0.2}\cdot 15^t\cdot w'(v)/\eps\] independently.
Hence, by Lemma~\ref{lemma:concentration}, the total contribution of such $|x_{e,t+1}-x_{e,t+1}^{MPC}|$ in $\diff(v,t+1)$ is with high probability at most 
\begin{align*}
    \sum_{ \begin{smallmatrix}{{\scriptstyle e=(v,u)\in E[V^{high}];}}\\ \scriptstyle\text{  $u$ turns bad}\end{smallmatrix}}2\eps x_{e,t} 
    &\le 2\eps \cdot \frac{4m^{-0.2}\cdot 15^t w'(v)}{\eps} ( y_{v,t} + m^{-0.2}w'(v))\\
    & \le 12 m^{-0.2}\cdot 15^t \cdot  w'(v),
\end{align*}
where we used $y_{v,t}\le w'(v)$ and assumed $m^{-0.2}\le 1/2$.
Similarly, their contribution in $\diff^{local}(v,t+1)$ is with high probability at most
\begin{align*}
    m\cdot \sum_{ \begin{smallmatrix}{{\scriptstyle e=(v,u)\in E[V_{i}];}}\\ \scriptstyle\text{  $u$ turns bad}\end{smallmatrix}}2\eps x_{e,t} 
    &\le m\cdot 2\eps \frac{4m^{-0.2} 15^t w'(v)}{m\eps} ( y_{v,t} + m^{-0.2}w'(v))\\
    & \le 12 m^{-0.2}\cdot 15^t \cdot  w'(v).\\
\end{align*}

Finally, combining the effect of old bad vertices and new bad vertices, we have
\begin{align*}
  & \diff(v,t+1) \\ = \ &\sum_{e\ni v;e\in E[V^{high}]}|x_{e,t+1}-x_{e,t+1}^{MPC}|\\
   =\ & \sum_{\begin{smallmatrix}{{\scriptstyle e\ni v; e \in E[V^{high}];}}\\ \scriptstyle |x_{e,t}-x_{e,t}^{MPC}|>0\end{smallmatrix}}|x_{e,t+1}-x_{e,t+1}^{MPC}| \,  + \sum_{\begin{smallmatrix}{{\scriptstyle e\ni v; e \in E[V^{high}];}}\\ \scriptstyle |x_{e,t}-x_{e,t}^{MPC}|=0\end{smallmatrix}}|x_{e,t+1}-x_{e,t+1}^{MPC}| \\
   \le \ & \sum_{\begin{smallmatrix}{{\scriptstyle e\ni v; e \in E[V^{high}];}}\\ \scriptstyle |x_{e,t}-x_{e,t}^{MPC}|>0\end{smallmatrix}}3|x_{e,t}-x_{e,t}^{MPC}| \,  + \sum_{\begin{smallmatrix}{{\scriptstyle e=(v,u)\in E[V^{high}];}}\\ \scriptstyle \text{$u$ turns bad }\end{smallmatrix}}2\eps x_{e,t} \\
   \le \ & 3\diff(v,t) + 12 m^{-0.2}\cdot 15^t \cdot  w'(v)\\
   \le \ & m^{-0.2}\cdot 15^{t+1}\cdot w'(v).
\end{align*}
Similarly we can show 
$\diff^{local}(v,t+1)\le m^{-0.2}\cdot 15^{t+1} \cdot w'(v)$.
\end{proof}

\begin{reminder}{Lemma~\ref{lemma:compare}}
With high probability, we have
$$|y_{v,t}-\tilde y_{v,t}^{MPC}| \le 6\eps \cdot w'(v),$$
and
$$|y_{v,t}-y_{v,t}^{MPC}| \le 6\eps \cdot w'(v),$$
for all $0\le t\le  I$ and all $v\in V^{high}$.
\end{reminder}

\begin{proof}
If $v$ is good by the start of iteration $t$, by Lemma~\ref{lemma:evolution},
\[ 0\le \tilde y_{v,t}^{MPC} - y_{v,t} \le 4\cdot 15^{t} m^{-0.2} w'(v) \le  4m^{-0.1}w'(v)\le \eps w'(v),\]
where we used $t\le I = \log m/(10\log 15)$, and assumed $4m^{-0.1}\le \eps$.
And similarly we have 
\[|y_{v,t}-y_{v,t}^{MPC}|\le \diff(v,t)\le  15^{t} m^{-0.2} w'(v) \le \eps w'(v).\]

Otherwise, suppose $v$ turned bad in iteration $t^*<t$.
Because $\tilde y_{v,t^*}^{MPC} - y_{v,t^*}\ge 0$, it must be that $v$ became frozen in MPC simulation while remained active in the centralized algorithm in iteration $t^*$. So
\[ \tilde y_{v,t}^{MPC}  = \tilde y_{v,t^*}^{MPC} \ge \mathcal{T}_{v,t^*}w'(v)\ge (1-4\eps)w'(v),\]
Since $v$ was good by the start of iteration $t^*$, we have 
\[y_{v,t^*} \ge \tilde y_{v,t^*}^{MPC}-\eps w'(v) \ge (1-5\eps)w'(v). \] 
Then by monotonicity of $x_{e,t}$ and Observation~\ref{obs:central valid}, we have 
$y_{v,t^{*}}\le y_{v,t} \le w'(v)$,
implying that \[|y_{v,t^*}-y_{v,t}| \le 5\eps w'(v).\]

Hence, we have 
\begin{align*}
|y_{v,t}-\tilde y_{v,t}^{MPC}|&=|y_{v,t}-\tilde y_{v,t^*}^{MPC}|\\
& \le |y_{v,t}-y_{v,t^*}|+|y_{v,t^*}-\tilde y_{v,t^*}^{MPC}|\\
& \le 5\eps w'(v)+ \eps w'(v)\\
& = 6\eps w'(v).
\end{align*}
Similarly,
\[|y_{v,t}-y_{v,t}^{MPC}| \le 6\eps w'(v). \qedhere\]
\end{proof}

\section*{Acknowledgements}
We are grateful to the reviewers of SPAA 2020 for their helpful comments.

The first author's work in this project was supported by funding from the European Research Council (ERC), under the European Union's Horizon 2020 research and innovation programme (grant agreement No. 853109).

	\bibliographystyle{alphaurl} 
	\bibliography{refs}

\end{document}